\documentclass[11pt]{amsart}

\usepackage{amsmath, amsthm, latexsym, amssymb, amsfonts, epsf, xcolor}
\usepackage[hidelinks]{hyperref} 
\usepackage[biblabel]{cite}
\usepackage{caption, subcaption, enumerate, color}
\usepackage{multirow,multicol}
\usepackage{mathtools}
\usepackage{bm}
\usepackage{blkarray}

\usepackage{booktabs}
\usepackage{framed}
\usepackage{setspace}
\usepackage{tikz}
\usepackage{graphicx}
\usepackage{wrapfig}
\usepackage{float}
\usepackage[dvips]{epsfig}
\usepackage{bbm}
\usepackage{url}
\usepackage{booktabs}
\usepackage{bibentry}
\usepackage[normalem]{ulem}
\usepackage{comment}
\usepackage{lingmacros}
\usepackage{cleveref}
\usepackage{extarrows}
\usepackage{multirow}
\usepackage{mathtools}
\usepackage{enumerate}
\usepackage{array}
\usepackage{bigstrut}
\usepackage{exscale,relsize}
\usepackage{graphicx}
\usepackage[utf8]{inputenc}
\usepackage[linesnumbered, ruled]{algorithm2e}

\definecolor{fporange}{RGB}{255,175,60}
\definecolor{fpgreen}{RGB}{100,200,100}
\definecolor{fpblue}{RGB}{100,170,255}
\definecolor{cellempty}{RGB}{230,240,250}

\usepackage{standalone}
\usepackage{tikz}
\usepackage{tikz-3dplot}

\DeclarePairedDelimiter\abs{\lvert}{\rvert}%

\newcommand{\gridcell}[4]{%
  \fill[#3] (#1,#2) rectangle ++(1,1);%
  \draw[gray!70] (#1,#2) rectangle ++(1,1);%
  \node at ({#1+0.5},{#2+0.5}) {#4};%
}

\newcommand{\cal}[1]{\mathcal{#1}}

\newcommand{\cM}{\cal M}

\newcommand{\cX}{\cal X}

\newcommand{\F}{{\mathbb F}}
\newcommand{\Fq}{{\mathbb F}_q}
\newcommand{\fq}{{\mathbb F}_q}
\newcommand{\N}{{\mathbb N}}

\newcommand{\xx}{{\bf x}}

\newcommand{\set}[1]{\left\{#1\right\}}

\DeclareMathOperator{\ini}{in}
\DeclareMathOperator{\simplex}{simplex}

\numberwithin{equation}{section}
\newtheorem{theorem}{Theorem}[section]
\newtheorem{lemma}[theorem]{Lemma}
\newtheorem{proposition}[theorem]{Proposition}
\newtheorem{corollary}[theorem]{Corollary}

\theoremstyle{definition}
\newtheorem{definition}[theorem]{Definition} 
\newtheorem{remark}[theorem]{Remark}
\newtheorem{example}[theorem]{Example}

\newcommand{\rmv}[1]{}

\DeclareMathOperator{\wt}{wt}

\DeclareMathOperator{\supp}{supp}
\DeclareMathOperator{\RS}{RS}
\DeclareMathOperator{\RM}{RM}

\DeclareMathOperator{\Per}{Per}
\DeclareMathOperator{\CAP}{CAP}
\DeclareMathOperator{\Car}{Car}
\DeclareMathOperator{\ev}{ev}
\DeclareMathOperator{\wH}{w_H}

\begin{document}


\title[Reed-Muller type codes over a combinatorial simplex:\\ an algebraic description]{Reed-Muller type codes over a combinatorial simplex:\\ an algebraic description}

\author[H. L\'opez]{Hiram H. L\'opez}
\author[Rodrigo]{Rodrigo San-José} 
\author[N. Shalqini]{Nart Shalqini}
\address[Hiram H. L\'opez]{Department of Mathematics\\ Virginia Tech\\ Blacksburg, VA USA}
\email{hhlopez@vt.edu}
\address[Rodrigo San-José]{Department of Mathematics\\ Virginia Tech\\ Blacksburg, VA USA}
\email{rsanjose@vt.edu}
\address[Nart Shalqini]{Department of Mathematics\\ Virginia Tech\\ Blacksburg, VA USA}
\email{nart@vt.edu}

\thanks{The authors were partially supported by the NSF grant DMS-2401558 and the Commonwealth Cyber Initiative. Hiram H. L\'opez was also partially supported by the NSF grant DMS-2502705. Rodrigo San-José was also partially supported by Grant PID2022-138906NB-C21 funded by MICIU/AEI/10.13039/501100011033 and by ERDF/EU}
\keywords{Evaluation codes, Reed-Muller codes, simplex, generalized Hamming weights, permutation group}
\subjclass[2010]{94B05; 11T71; 14G50}

\begin{abstract}
Given an ordered set $B$ of a finite field, a combinatorial simplex over $B$ is defined as the set of vectors such that the positions of the entries, with respect to $B$, sum up to a fixed integer. CAP codes are Reed-Muller type codes defined over a combinatorial simplex. They were recently introduced by Kopparty et al. as a high-rate alternative to classical Reed-Muller codes, capable of achieving arbitrarily high rates close to one for any fixed minimum distance. In this paper, we use tools from commutative algebra to analyze a combinatorial simplex and its associated CAP code. We give a universal Gr\"obner basis for the vanishing ideal of a combinatorial simplex. We describe the generalized Hamming weights of a CAP code in terms of the footprint of the vanishing ideal. For the minimum distance case, we proved a closed formula. We give a set of polynomials whose evaluations on the combinatorial simplex generate the dual of the CAP code. We describe the affine permutations that leave invariant a combinatorial simplex and use this information to prove that, in some cases, the permutation group of a CAP code is a symmetric group.
\end{abstract}

\maketitle

\section{Introduction} \label{S:intro}
Let $S \subseteq \F_q^m$ be a set of points over a finite field $\fq$, and let $V\subset \fq[x_1,\dots,x_m]$ be a set of polynomials. Evaluation codes are obtained by evaluating the set of polynomials in $V$ at the points of $S$. In particular, given $\nu\geq 0$, Reed-Muller codes are obtained when $S=\fq^m$ and $V$ is given by all the polynomials of degree at most $\nu$ \cite{kasamiRM}. 
There are many generalizations of Reed-Muller codes in the literature, e.g., evaluating in Cartesian sets \cite{LOPEZ201913}, or evaluating homogeneous polynomials in the projective space \cite{sorensen}. When $V$ is fixed to be polynomials of at most degree $\nu$ (resp., homogeneous polynomials of degree exactly $\nu$ when working with the projective space), the corresponding codes are called Reed-Muller type (resp., projective Reed-Muller type) codes, and have been studied extensively in the literature \cite{Lopez2014-bq,nestedcartesian,geramita,villarreal_gorenstein_prm_type}. 

In \cite{kopparty_high_rate}, Kopparty, Kumar, and Sha  introduce a new class of evaluation codes where the evaluation domain is restricted to an $m$-dimensional simplex. These codes, called combinatorial array of polynomial (CAP) codes, achieve a high rate while maintaining a constant minimum distance. In this paper, we further investigate the theoretical properties of these codes using methods and techniques from commutative algebra and Gr\"{o}bner basis theory. 

The generalized Hamming weights (GHWs) of a linear code are a set of parameters introduced by Wei \cite{wei1991ghw} to characterize the security of the wiretap channel of type II when using coset encoding. Over time, GHWs and their generalizations have found many applications, e.g., to list decoding \cite{guruswammiGHWlistdecoding,guruswammiGHWlistdecodingTensorInterleaved}, linear ramp secret-sharing schemes \cite{luoPropertiesRGHWs,matsumotoRGHW}, or quantum codes \cite{hamadaSteaneEnlargement,kkks}. Although there exist algorithms to compute them \cite{sanjoseGHWsPackage}, their computation is, in general, NP-hard, since the minimum distance corresponds to the first GHW and its computation is NP-hard \cite{vardyIntractability}. Nevertheless, we know the GHWs for some families of evaluation codes, such as Reed-Muller codes \cite{pellikaanGHWRM}, affine Cartesian codes \cite{beelenGHWcartesian}, Hermitian codes and norm-trace codes \cite{munuera1999hermitian,sanjoseGHWNT}, hyperbolic codes \cite{eduardoGHWHyperbolic}, some matrix-product codes \cite{sanjoseGHWMPC}, or square-free codes \cite{GHWsToricSquarefree,jaramillo,sanjoseGHWsquarefree}. 

Two other important aspects related to our work are the dual codes and the permutation group of a linear code. The dual and parity-check matrices of any family of codes are ubiquitous in many applications of coding theory, e.g., they are essential for syndrome-based decoding algorithms \cite{fengraoMajority}, and are required for applications such as secret sharing schemes \cite{chen2007secure} or quantum codes \cite{kkks}. The permutation group of a code has been extensively studied because of its different applications, such as fault-tolerant quantum computation \cite{Grassl_Roetteler}; polar coding \cite{bioglio,  1Johannsen_polar, polar_19}, proving that some codes achieve capacity \cite{RM_CA1, RM_CA2}; and code-based cryptography \cite{LESS_is_even_more, LESS}.

The structure of the paper is as follows. In Section \ref{s:vanishing_ideal}, we study the vanishing ideal of the set of evaluation points of CAP codes, which is crucial for the following sections. We derive the GHWs of CAP codes in Section \ref{s:ghws}, and recover the minimum distance as a special case.  Our Gröbner-basis approach yields the dimension and minimum distance in arbitrary characteristic in a uniform way; the bound on the minimum distance also follows from \cite{kopparty_high_rate} via a Schwartz–Zippel-type argument. In Section \ref{s:dual}, we describe the dual code of a CAP codes as the puncturing of a Cartesian code, but also as an evaluation code. Finally, we describe the permutation group of CAP codes in some cases by studying the group of affine permutations.

\section{Preliminaries}
In this section, we introduce notation and preliminary results used in the following sections. Let $\F_q$ be a finite field with $q$ elements. An $[n,k,d]$ {\bf code} over $\Fq$ is an $\F_q$-vector space $C \subseteq \F_q^n$ of dimension $k$ and minimum distance
\(d:=\min \left\{\wt(c): c \in C, c \neq 0 \right\};\)
here, $\wt(c)$ denotes the {\bf Hamming weight}, defined as the number of nonzero entries of $c$. An extension of this concept is the generalized Hamming weights (GHWs) of a linear code \cite{wei1991ghw}, which require the notion of support. We define the {\bf support} of a set $D\subseteq \fq^n$ as
$$
\supp(D):=\left\{ 1\leq i \leq n \ : \ \exists \: c \in D \textnormal{ with } c_i \neq 0 \right\}.
$$
Then, for any $1\leq r \leq k$, the $r$-th {\bf generalized Hamming weight} of $C$ is defined by 
$$
d_r(C):=\min \left\{ \abs{ \supp(D) } \ : \ D \textnormal{ is a subcode of } C \textnormal{ of dimension } r \right\}. 
$$
The {\bf dual} of $C$ is given by
\[C^{\perp} : = \{ w \in \F_q^n \ : \ w \cdot c = 0 \text{ for all } c \in C \}, \]
where $w \cdot c$ denotes the standard Euclidean inner product. 

The \textbf{shortening} of $C$ in $\{i\}$, denoted by $C_{\{i\}}$, is the code
$$
C_{\{i\}} := \{ (c_1,\dots,c_{i-1},c_{i+1},\dots,c_n) \ : \ (c_1,\dots,c_{i-1},0,c_{i+1},\dots,c_n)\in C\}. 
$$
The \textbf{puncturing} of $C$ in $\{i\}$, denoted by $C^{\{i\}}$, is the code
$$
C^{\{i\}} := \{ (c_1,\dots,c_{i-1},c_{i+1},\dots,c_n) \ : \ (c_1,\dots,c_{i-1},c_i,c_{i+1},\dots,c_n) \in C, \text{for some } c_i \in \F_q \}. 
$$
For $S\subset[n]$, we write $C_S$ (resp., $C^S$) for the successive shortening (resp., puncturing) of $C$ in the coordinates indexed by the elements in $S$. 

Let $B := \left\{b_0, b_1, \dots, b_{\ell-1} \right\} \subseteq \F_q$
be an ordered set of $\ell$ different elements of $\F_q$. Denote by $\F_q[x]_{< k}$ the set of single-variable polynomials of degree less than $k$ with coefficients in $\F_q$. A {\bf Reed-Solomon} (RS) code is denoted and defined by
\[\RS(B,k) := \left\{ \left(f(b_0),\ldots,f(b_{\ell-1})\right) \ : \ f(x) \in \F_q[x]_{< k} \right\}.\]

\begin{definition}\label{26.03.12}
For positive integers $m$ and $s \leq \ell$, we define the {\bf combinatorial simplex}, or just {\bf simplex}, as
\[
B(m,s) := \left\{(b_{i_1}, b_{i_2}, \dots, b_{i_m}) \ : \ i_1+\cdots+i_m <s, \ i_j \in \N \right \}.
\]
The set $B(m,s)$ is also called an {\bf $m$-dimensional simplex of side length $s$}. When $s=\ell$, we simplify the notation to $B(m)$.
\end{definition}
\begin{example}
We have $B(1,3) = \{b_0, b_1, b_2 \}$; see Figure~\ref{26.03.10}.
\begin{figure}[h]
\begin{center}
\begin{tikzpicture}[scale=1.]

\draw [-latex] (8,0) -- (13,0)node[right] {$\F_q$};
\foreach \x in {8,...,12} \draw (\x cm,-2pt) -- (\x cm,2pt);

\foreach \x in {1,2,3} \fill [color=red](7+\x,0)node[below]{} {circle(.15cm)};
\foreach \x in {1} \fill [color=red](7+\x,-0.25)node[below]{\color{black} $b_0$} {circle(.0cm)};
\foreach \x in {2} \fill [color=red](7+\x,-0.25)node[below]{\color{black} $b_1$} {circle(.0cm)};
\foreach \x in {3} \fill [color=red](7+\x,-0.25)node[below]{\color{black} $b_2$} {circle(.0cm)};
\end{tikzpicture}
\end{center}
    \caption{Set $B(1,3)$, a $1$-dimensional simplex of side length $3$.}
    \label{26.03.10}
\end{figure}

We also have that $B(2,3) = \left\{(b_0,b_0), (b_0,b_1), (b_0,b_2), (b_1,b_0), (b_1,b_1), (b_2,b_0) \right\}$. If we define the order $b_0 < b_1 < b_2$, then $B(2,3)$ can be seen as a triangle in $\F_q^2$; see Figure~\ref{26.03.11}.
\begin{figure}
\begin{tikzpicture}[scale=1]
\foreach \x in {1,2,...,3}{
\draw [dashed] (0,\x)node[left]{} -- (3,\x)node[right] {};
\draw [dashed] (\x,0)node[below]{} -- (\x,3)node[right] {};}

\draw [-latex] (0,0) -- (3.3,0)node[right] {$\F_q$};
\draw [-latex] (0,0) -- (0,3.3)node[above] {$\F_q$};

\foreach \y in {0}
\foreach \x in {0,...,2} \fill [color=red](\x,\y) {circle(.15cm)};
\foreach \y in {1}
\foreach \x in {0,...,1} \fill [color=red](\x,\y) {circle(.15cm)};
\foreach \y in {2}
\foreach \x in {0} \fill [color=red](\x,\y) {circle(.15cm)};

\foreach \x in {0} \fill [](\x,-0.2)node[below]{$b_0$};
\foreach \x in {1} \fill [](\x,-0.2)node[below]{$b_1$};
\foreach \x in {2} \fill [](\x,-0.2)node[below]{$b_2$};

\foreach \y in {0} \fill [](-0.2,\y)node[left]{$b_0$};
\foreach \y in {1} \fill [](-0.2,\y)node[left]{$b_1$};
\foreach \y in {2} \fill [](-0.2,\y)node[left]{$b_2$};

\end{tikzpicture}
   \caption{Set $B(2,3)$, a $2$-dimensional simplex of side length $3$.}
    \label{26.03.11}
\end{figure}

Similarly, the set $B(3,3)$ can be seen as a tetrahedron in $\F_q^3$ with vertices \[(0,0,0), (b_2,0,0), (0,b_2,0), (0,0,b_2).\] 
\end{example}
In general, by ordering the elements of $B = \{b_0, \dots, b_{l-1} \}$ in $\F_q$ such that $b_0 < \cdots < b_{\ell-1}$ and the rest of the elements in $\F_q$ are greater than $b_{\ell-1}$, then $B(m,s)$ is an $m$-dimensional polytope in $\F_q^m $ with vertices in
$(0,\ldots,0), (b_{s-1},0,\ldots,0), \ldots, (0,\ldots,0, b_{s-1})$.

\begin{remark}
Note that
\[|B(m,s)| = \binom{m + s - 1}{m},
\]
which equals the number of monomials of degree at most $s-1$ in $m$ variables.
\end{remark}

Let $\F_q[{\bm x}]_{\leq \nu} := \F_q[x_1,\ldots,x_m]_{\leq \nu}$ be the set of polynomials in $m$ variables of total degree at most $\nu$. Assume that $\F_q^m = \left\{P_1,\ldots, P_{q^m}\right\}$. A {\bf Reed-Muller} code is denoted and defined by 
\[\RM(m,\nu) := \left\{ \left(f(P_1),\ldots,f(P_{q^m})\right) \ : \ f({\bm x}) \in \F_q[{\bm x}]_{\leq \nu} \right\}.\]

Reed-Muller type codes are defined by the evaluation of elements in $\F_q[{\bm x}]_{\leq \nu}$ over arbitrary sets of points. We now define a Reed-Muller type code that was introduced in \cite{kopparty_high_rate} and is given by the evaluation of polynomials on the elements of a simplex.

\begin{definition}
Let $B(m,s) = \left\{P_1, \ldots, P_n \right\}$ be a simplex. For a nonnegative integer $\nu$, we define a {\bf combinatorial array for polynomials} (CAP) code as
\[
\CAP(B,m,s,\nu) := \left\{ \left(f(P_1),\ldots,f(P_n)\right) \ : \ f({\bm x}) \in \F_q[{\bm x}]_{\leq \nu}\right\}.
\]
\end{definition}

\begin{remark}\label{26.05.01}
Observe that in a CAP code, we can always assume that $s=|B|$. Indeed, given $B = \left\{b_0, b_1, \dots, b_{\ell-1} \right\} \subseteq \F_q$, we can define $B_s := \left\{b_0, b_1, \dots, b_{s-1} \right\}$. Thus, we obtain that
\[\CAP(B,m,s,\nu) = \CAP(B_s,m,s,\nu),\]
where $s = |B_s|$.
\end{remark}

As a consequence of Remark~\ref{26.05.01}, from now on, we assume that $s = |B|=\ell$ and we denote the CAP code $\CAP(B,m,s,\nu)$ by $\CAP(B,m,\nu)$.

\begin{example}
The CAP code $\CAP(B,1,\nu)$ is the Reed-Solomon code $\RS(B,\nu)$.
\end{example}
\begin{example}
Take the set $B=\left\{b_0, b_1, b_2 \right\} \subseteq \F_q$. The combinatorial simplex $B(2)$ appears in Figure~\ref{26.03.11}. The CAP code $\CAP(B,2,1)$ is obtained by evaluating all the bivariate polynomials up to degree one on the points that appear in Figure~\ref{26.03.11}.
\end{example}
We say that two codes $C$ and $C^\prime$ over $\F_q$ are {\bf monomially equivalent} if there is a weight-preserving bijective linear map between $C$ and $C'$. The following examples show that distinct ordered sets $B$ and $B^\prime$ in $\F_q$ of the same size may not generate monomially equivalent CAP codes, even when one is a permutation of the other.
\begin{remark}
By taking the ordered sets $B = \{1, 2, 3, 6\} \subseteq \mathbb{F}_7$ and $B' = \{6, 3, 2, 1\} \subseteq \mathbb{F}_7$, using \cite{sagemath, M2, Ball2020CodingTP}, we can verify that the $6$ nonzero scalar multiples of the polynomial $x_1 + x_2 + x_3 + x_4$ give a weight class of size $6$ at weight $19$ in $\mathrm{CAP}(B, 4, 1)$, while $\mathrm{CAP}(B', 4, 1)$ has no codewords of weight $19$. In particular, the two codes are not monomially equivalent.
\end{remark}

\section{Vanishing Ideal}\label{s:vanishing_ideal}

In this section, we determine the vanishing ideal of an $m$-dimensional combinatorial simplex and focus on its properties, including a universal Gr\"obner basis. These properties are relevant to the construction of the dual of a CAP code. We recommend \cite{cox,villarreal_book_monomial_algebras} for the basic terminology from commutative algebra that we will consider. 

Given a set of $n$ distinct ordered points $\mathcal{X}=\{P_1,\ldots,P_n\}$ in $\F_q^m$, with $n\geq 2$, the {\bf evaluation map} is the $\F_q$-linear map given by 
\[
\begin{array}{lccc}
{\ev} \colon & \F_q[{\bm x}] & \rightarrow & \F_q^{n}\quad \\
& f & \mapsto & f(\cX) := \left(f(P_1),\ldots,f(P_n)\right).
\end{array}
\]

The kernel of ${\ev}$, denoted by $\operatorname{I}(\cX)$ and called the {\bf vanishing ideal} of $\mathcal{X}$, consists of the polynomials of $\F_q[{\bm x}]$ that vanish at all points of $\mathcal{X}$.
\begin{remark}\label{26.03.15}
The evaluation map induces an isomorphism of $\F_q$-linear spaces between the quotient ring $\F_q[{\bm x}]/\operatorname{I}(\cX)$ and $\F_q^n$.
\end{remark}

Let $J$ be an ideal in $\F_q[{\bm x}]$. We define the {\bf affine algebraic variety} $\operatorname{V}(J)$ in $\F_q^m$ by
\[
\operatorname{V}(J) := \left\{ P \in \F_q^m \ : \ f(P) = 0 \text{ for all } f \in J \right\}.
\]

The following result shows how to describe a combinatorial simplex as an affine algebraic variety.
\begin{lemma}\label{26.03.13}
Let $B = \left\{b_0, b_1, \dots, b_{\ell-1} \right\}$ be an ordered subset of $\F_q$. Define the ideal
\begin{equation}\label{JB}
    J_{B(m)} := \left \langle \prod^{j_{1}-1}_{i=0}(x_1 - b_i)  \dots \prod^{j_{m}-1}_{i=0}(x_m - b_i) \ : \ j_{1} + \dots + j_{m} = \ell \right \rangle \subseteq \F_q[{\bm x}].
\end{equation}
Then, we have that $B(m) = \operatorname{V}(J_{B(m)})$.
\end{lemma}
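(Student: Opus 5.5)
We prove the two inclusions $B(m)\subseteq \operatorname{V}(J_{B(m)})$ and $\operatorname{V}(J_{B(m)})\subseteq B(m)$ separately. For a point $P=(p_1,\dots,p_m)\in\F_q^m$ and each coordinate $k$, we attach an ``index'' $e_k(P)\in\{0,1,\dots,\ell\}$: if $p_k=b_i\in B$ we set $e_k(P)=i$, and if $p_k\notin B$ we set $e_k(P)=\ell$. With this notation, the univariate factor $\prod_{i=0}^{j-1}(x_k-b_i)$ vanishes at $p_k$ if and only if $j>e_k(P)$. This holds because the $b_i$ are distinct, so $p_k$ is a root of this product exactly when $p_k\in\{b_0,\dots,b_{j-1}\}$. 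Consequently, the generator $g_{\bm j}:=\prod_k\prod_{i=0}^{j_k-1}(x_k-b_i)$ vanishes at $P$ if and only if $j_k>e_k(P)$ for some $k$.

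For the first inclusion, let $P\in B(m)$, so $P=(b_{i_1},\dots,b_{i_m})$ with $i_1+\cdots+i_m<\ell$ and $e_k(P)=i_k$. Take any $\bm j$ with $j_1+\cdots+j_m=\ell$. If $j_k\le i_k$ held for all $k$, summing would give $\ell\le\sum i_k<\ell$, a contradiction. Hence some $j_k>i_k$, so $g_{\bm j}(P)=0$. Since every generator vanishes at $P$, every element of the ideal does, and $P\in\operatorname{V}(J_{B(m)})$.

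For the reverse inclusion, we argue by contrapositive. Let $P\notin B(m)$; we exhibit a generator with $g_{\bm j}(P)\neq 0$. In both of the following cases, $\sum_k e_k(P)\ge \ell$. If some $p_k\notin B$, then $e_k(P)=\ell$. Otherwise all $p_k\in B$, and $P\notin B(m)$ means $\sum e_k(P)\ge\ell$ by definition. Since $\sum_k e_k(P)\ge\ell$, we can choose nonnegative integers $j_k\le e_k(P)$ with $\sum j_k=\ell$, for instance greedily, taking $j_1=\min(e_1,\ell)$, then $j_2=\min(e_2,\ell-j_1)$, and so on. For this choice, $j_k\le e_k(P)$ for every $k$, so no factor vanishes: each $p_k\notin\{b_0,\dots,b_{j_k-1}\}$. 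Thus $g_{\bm j}(P)\ne0$ and $P\notin \operatorname{V}(J_{B(m)})$.

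The only point requiring care is the equivalence ``factor vanishes iff $j>e_k(P)$'', including the convention for coordinates outside $B$ and the empty product when $j_k=0$, which equals $1$. Once that is set up, both inclusions are elementary counting arguments, so I do not expect a real obstacle.
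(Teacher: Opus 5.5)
Your proof is correct and follows the same route as the paper. The first inclusion is the same summation argument on the indices, and the reverse inclusion, as in the paper, exhibits an element of $J_{B(m)}$ that does not vanish at $P$. Your convention $e_k(P)=\ell$ for coordinates outside $B$ merges the paper's two cases into one. Your greedy choice of $j_k\le e_k(P)$ with $\sum j_k=\ell$ spells out a step the paper leaves implicit: why its product $g$, whose exponents $i_k$ sum to at least $\ell$, lies in $J_{B(m)}$.
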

\begin{proof}
By definition,
\[
B(m) = \left\{(b_{i_1}, b_{i_2}, \dots, b_{i_m}) \ : \ i_1+\cdots+i_m <\ell, \ i_j \in \N \right \}.
\]
First, we show that $B(m) \subseteq V(J_{B(m)})$. Let $\mathbf{b} = (b_{i_1}, \dots, b_{i_m})$ be an element of $B(m)$ and $f({\bm x})$ a generator of $J_{B(m)}$, i.e.,
\begin{equation*}
f({\bm x}) \;=\; \prod_{i=0}^{j_{1}-1} (x_1 - b_i)\cdots \prod_{i=0}^{j_{m}-1} (x_m - b_i),
\end{equation*}
for some $j_1,\dots,j_m$ with $j_1+\dots+j_m=\ell$. If ${i_r} \geq j_{r}$ for all $1 \leq r \leq m$, then
\[
i_1 + \cdots + i_m \geq j_{1} + \cdots + j_{m} = \ell,
\]
which contradicts the definition of $\mathbf{b}$. Hence, there exists $1\leq r \leq m$ such that
$i_r < j_r$. For this~$r$, the element $(x_r - b_{i_r})$ divides $f({\bm x})$, and $f(\mathbf{b}) = 0$.
Thus, $B(m) \subseteq V(J_{B(m)})$.

We now check that \(V(J_{B(m)}) \subseteq B(m)\). Suppose \(\mathbf{b} \notin B(m)\). There are two cases: either some component \(b_j \notin B\), or all of the entries \(b_j \in B\). In the former case, \(\mathbf{b}\) does not vanish on the element
\[
\prod_{i=0}^{\ell-1}(x_j-b_i)\in J_{B(m)},
\]
which means \(\mathbf{b}\notin V(J_{B(m)})\). For the latter, write \(\mathbf{b}=(b_{i_1},\dots,b_{i_m})\notin B(m)\). Then
\[
i_1+\cdots+i_m\ge \ell.
\]
Define
\[
g({\bm x})
:=
\prod_{i=0}^{i_1-1}(x_1-b_i)\cdots
\prod_{i=0}^{i_m-1}(x_m-b_i).
\]
Then \(g({\bm x}) \in J_{B(m)}\), since \(i_1+\cdots+i_m\ge \ell\). However,
\[
g({\bm b})\neq 0,
\]
i.e., \(\mathbf{b}\notin V(J_{B(m)})\).
\end{proof}

The following result is known as the Affine $\F_q$-Nullstellensatz.
\begin{proposition}[\hspace{0.1pt}{\cite[Theorem 2.3]{ghorpade}}] \label{26.03.14} Let $J$ be an ideal in $\F_q[{\bm x}]$. The vanishing ideal of the affine algebraic variety $V(J)$ is given by
\begin{equation*}
    \operatorname{I}(\operatorname{V}(J)) = J + \left<x_1^q-x_1, \dots, x_m^q-x_m\right>.
\end{equation*}
\end{proposition}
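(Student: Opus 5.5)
The plan is to reduce the statement to the classical Nullstellensatz over the algebraic closure $\overline{\F}_q$. Put
\[
J' := J + \left<x_1^q-x_1,\dots,x_m^q-x_m\right>.
\]
The inclusion $J' \subseteq \operatorname{I}(\operatorname{V}(J))$ is immediate. Every $f\in J$ vanishes on $\operatorname{V}(J)$ by definition. Every point of $\operatorname{V}(J)$ lies in $\F_q^m$, and each $x_i^q-x_i$ vanishes on $\F_q^m$ because $a^q=a$ for all $a\in\F_q$.

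For the reverse inclusion, I will first show that $J'$ is radical. Consider $J'\overline{\F}_q[{\bm x}]$, which contains the polynomials $x_i^q-x_i$. Its zero set $Z$ in $\overline{\F}_q^m$ is therefore contained in $\F_q^m$. Since the generators of $J$ have coefficients in $\F_q$, this gives $Z=\operatorname{V}(J)$.

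The quotient $\overline{\F}_q[{\bm x}]/\langle x_i^q-x_i\rangle$ is isomorphic, by the Chinese remainder theorem, to a product of copies of $\overline{\F}_q$ indexed by $\F_q^m$. This uses the fact that each $x_i^q-x_i$ is squarefree, with $q$ distinct roots. Every quotient of such a finite product of fields is again reduced, so $J'\overline{\F}_q[{\bm x}]$ is a radical ideal. By Hilbert's Nullstellensatz over $\overline{\F}_q$, any polynomial $f$ vanishing on $Z=\operatorname{V}(J)$ then lies in $J'\overline{\F}_q[{\bm x}]$.

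One step remains: if $f\in\F_q[{\bm x}]$ lies in $J'\overline{\F}_q[{\bm x}]$, then $f\in J'$. I expect this descent to be the only delicate point. There are two ways to handle it. The first is a linear-algebra argument: membership of $f$ in the ideal is equivalent to the solvability of a linear system over $\F_q$ (for instance, reduction modulo a Gröbner basis of $J'$, which is computed over $\F_q$ and remains a Gröbner basis after extending scalars), and such a system has a solution over $\F_q$ whenever it has one over $\overline{\F}_q$.

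Alternatively, one can avoid $\overline{\F}_q$ entirely. Work directly in $R=\F_q[{\bm x}]/\langle x_i^q-x_i\rangle\cong\F_q^{\F_q^m}$, the ring of functions $\F_q^m\to\F_q$. In this ring the ideals correspond exactly to subsets of $\F_q^m$: an ideal is determined by its common zero set, because the indicator functions of single points are idempotents in $R$. The image of $J$ in $R$ is then the ideal of all functions vanishing on $\operatorname{V}(J)$. Pulling back to $\F_q[{\bm x}]$ gives $\operatorname{I}(\operatorname{V}(J))=J'$. I would present this second argument, since it is self-contained and sidesteps the field-extension issue.
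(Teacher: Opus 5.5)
The paper gives no proof of this proposition: it is quoted from Ghorpade's note and used as a black box in the proof of Theorem~\ref{t:vanishing_jb}. Your argument is correct and would make that step self-contained.

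In your first route every step checks out. Over $\overline{\F}_q$ the quotient by $\langle x_i^q-x_i\rangle$ is a finite product of fields, so $J'\overline{\F}_q[\xx]$ is radical and the Nullstellensatz applies. For the descent, a Gröbner basis computed over $\F_q$ remains one after extending scalars, by Buchberger's criterion.

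The second route is the better one to present, and it shows the first uses more than it needs. Once the coordinate ring is a finite product of fields indexed by points, ideals correspond to subsets of $\F_q^m$ directly, with no Nullstellensatz and no descent. Two points should be made explicit there.

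First, the identification of $R$ with the ring of functions $\F_q^m\to\F_q$ is exactly the case $J=0$ of the proposition, so it must be proved, not asserted. Your CRT argument already proves it over $\F_q$, because $x_i^q-x_i=\prod_{a\in\F_q}(x_i-a)$ splits there with distinct roots and the CRT isomorphism is evaluation. Alternatively, count dimensions: the monomials with all exponents below $q$ span $R$, so $\dim R\le q^m$. Evaluation is onto, since $\prod_{i=1}^m\bigl(1-(x_i-p_i)^{q-1}\bigr)$ is the indicator of $P=(p_1,\dots,p_m)$.

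Second, state the idempotent step precisely: if $g\in\mathfrak a$ and $g(P)\neq 0$, then $e_P=g(P)^{-1}e_Pg\in\mathfrak a$. Unwound into polynomials, this gives an explicit certificate. For each $P\notin\operatorname{V}(J)$ choose $g_P\in J$ with $g_P(P)\neq0$. Then $h=1-\prod_P(1-g_P^{q-1})$ lies in $J$. For $f$ vanishing on $\operatorname{V}(J)$, write $f=fh+(f-fh)$, where $fh\in J$ and $f-fh$ vanishes on all of $\F_q^m$.
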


We can now describe the vanishing ideal of a combinatorial simplex.
\begin{theorem}\label{t:vanishing_jb}
Let $B = \left\{b_0, b_1, \dots, b_{\ell-1} \right\}$ be an ordered subset of $\F_q$. The vanishing ideal of the simplex $B(m)$ is given by
    $$
    \operatorname{I}(B(m)) = J_{B(m)},
    $$
    where $J_{B(m)}$ is defined in Equation~\eqref{JB}.
\end{theorem}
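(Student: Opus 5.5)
The plan is to combine Lemma~\ref{26.03.13} with the Affine $\F_q$-Nullstellensatz (Proposition~\ref{26.03.14}). By Lemma~\ref{26.03.13}, $B(m)=\operatorname{V}(J_{B(m)})$. Proposition~\ref{26.03.14} then gives
\[
\operatorname{I}(B(m)) = \operatorname{I}(\operatorname{V}(J_{B(m)})) = J_{B(m)} + \left<x_1^q-x_1,\dots,x_m^q-x_m\right>.
\]
So the whole statement reduces to showing $x_r^q-x_r\in J_{B(m)}$ for every $1\le r\le m$.

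For this, take the generator of $J_{B(m)}$ with $j_r=\ell$ and $j_t=0$ for $t\neq r$. Empty products equal $1$, so this generator is
\[
\prod_{i=0}^{\ell-1}(x_r-b_i) \in J_{B(m)}.
\]
Since $B\subseteq\F_q$ consists of distinct elements, each linear factor $x_r-b_i$ divides
\[
x_r^q-x_r=\prod_{a\in\F_q}(x_r-a).
\]
These factors are pairwise coprime, so their product divides $x_r^q-x_r$. Hence $x_r^q-x_r\in J_{B(m)}$.

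It follows that $J_{B(m)}+\left<x_1^q-x_1,\dots,x_m^q-x_m\right>=J_{B(m)}$, which proves the theorem. There is no real obstacle here. The only points to check are that the degenerate choices of exponent vectors $(j_1,\dots,j_m)$, with all but one $j$ equal to zero, are allowed in \eqref{JB}, and that we use the distinctness of the $b_i$.

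As a sanity check, one could alternatively avoid Proposition~\ref{26.03.14}. The inclusion $J_{B(m)}\subseteq \operatorname{I}(B(m))$ is immediate from Lemma~\ref{26.03.13}. For the reverse inclusion, one would compare dimensions: show that the quotient $\F_q[\bm x]/J_{B(m)}$ is spanned by the $\binom{m+\ell-1}{m}$ standard monomials of degree less than $\ell$, using the leading terms $x_1^{j_1}\cdots x_m^{j_m}$ of the generators, and then invoke Remark~\ref{26.03.15}. The Nullstellensatz route is shorter, so that is the one I would write.
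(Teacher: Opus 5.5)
Your proof is correct and matches the paper's: you reduce, via Lemma~\ref{26.03.13} and the Affine $\F_q$-Nullstellensatz, to showing $x_r^q-x_r\in J_{B(m)}$, and then use the generator with $j_r=\ell$ (all other $j_t=0$), which divides $x_r^q-x_r$. Your remarks that the empty products equal $1$ and that the $b_i$ are distinct simply spell out what the paper leaves implicit.
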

\begin{proof}
By Lemma~\ref{26.03.13} and Proposition~\ref{26.03.14}, we have that
\[
\operatorname{I}(B(m)) = \operatorname{I}(\operatorname{V}(J_{B(m)})) = J_{B(m)} + \left<x_1^q-x_1, \dots, x_m^q-x_m\right>.
\]
  It suffices to show that $\left<x_1^q-x_1, \dots, x_m^q-x_m\right> \subseteq J_{B(m)}$. For $1 \leq r \leq m$, take $j_r=\ell$.
  Then, 
  \begin{equation*}
      \prod_{i=0}^{j_{1}-1} (x_1 - b_i)\cdots \prod_{i=0}^{j_{m}-1} (x_m - b_i) = \prod^{l-1}_{i=0} (x_r-b_i) \in J_{B(m)},
  \end{equation*}
  which clearly divides $x_r^q - x_r$. Hence, $x_r^q - x_r \in J_{B(m)}$ for $1 \leq r \leq m$, finishing the proof.
\end{proof}

We denote the set of monomials in $\F_q[{\bm x}]$ by $\cM$. A {\bf monomial order} $\prec$ on $\cM$ is a total order with the following properties:
\begin{itemize}
    \item The element 1 is the least monomial.
    \item If $M_1 \prec M_2$, then $M M_1 \prec M M_2$, for all $M, M_1, M_2 \in \cM$.
\end{itemize}
Fix a monomial order $\prec$ in $\cM$ and let $f$ be a nonzero element in $\F_q[{\bm x}]_{\leq r}$. The greatest monomial that appears in $f$ with respect to $\prec$, denoted by $\ini_\prec(f)$, is called the {\bf leading monomial} of $f$. Although the initial depends on $\prec$, we will just denote it $\ini(f)$, since for our purposes in this work, the chosen monomial order will not be relevant. Given an ideal $I \subset \F_q[{\bm x}]$, a {\bf Gr\"{o}bner basis} for $I$ is a set
\[\{f_1, \ldots, f_s\} \subseteq I\]
such that for every polynomial $f \in I \setminus \{0\}$, we have that $\ini(f)$ is a multiple of $\ini(f_i)$ for some $i \in \{1, \ldots, s\}$. The concept of Gr\"{o}bner basis was introduced in ~\cite{buchberger}, where the author proved that if $\{f_1, \ldots, f_s\}$ is a Gr\"{o}bner basis for $I$, then $I = \left<f_1, \ldots, f_s\right>$, and that every ideal admits a Gr\"{o}bner basis with respect to a fixed monomial order. We use in the following sections Gr\"{o}bner basis tools to compute the minimum distance and the GHWs of a CAP code.

The {\bf footprint of an ideal} $I \subset \F_q[{\bm x}]$, denoted by $\Delta(I)$, is the set of all monomials in $\F_q[{\bm x}]$ which are not multiples of any $\ini(f)$, for $f \in I$. The {\bf footprint of a set} $\{f_1, \ldots, f_s\} \subset \F_q[{\bm x}],$  denoted by $\Delta(f_1,\ldots,f_s)$, is the set of all monomials which are not multiples of any $\ini(f_i)$, for $1 \leq i \leq s$. We can see that $\{f_1, \ldots, f_s\}$ is a Gr\"{o}bner basis for $I$ if and only if 
$\Delta(I) = \Delta(f_1,\ldots,f_s)$.
\begin{remark}\label{26.03.16}
An important result in Buchberger's thesis~\cite{buchberger} states that the set of classes $\{ M + I \mid M \in \Delta(I) \} \subseteq \F_q[{\bm x}]_{\leq r}/I$ is a basis for $\F_q[{\bm x}]_{\leq r}/I$ as an $\F_q$-vector space.
\end{remark}

By combining Remarks~\ref{26.03.15} and~\ref{26.03.16}, we obtain the following result that gives conditions to determine if a set of generators is a Gr\"{o}bner basis for a vanishing ideal.
\begin{lemma}\label{l:footprint_size_n}
Let $\cX$ be a set of $n$ distinct points in $\F_q^m$. Assume that $\operatorname{I}(\cX) = \left<f_1, \ldots, f_s\right>$ is the vanishing ideal of $\cX$. We have that $\{f_1,\ldots,f_s\}$ is a Gr\"{o}bner basis for $\operatorname{I}(\cX)$ if and only if
\[
|\Delta(f_1,\ldots,f_s)| = n.
\]
\end{lemma}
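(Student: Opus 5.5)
The plan is to compare the footprint $\Delta(f_1,\ldots,f_s)$ with the true footprint $\Delta(\operatorname{I}(\cX))$ and use Remarks~\ref{26.03.15} and~\ref{26.03.16} to show that the true footprint always has exactly $n$ elements.

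First, since each $f_i$ lies in $\operatorname{I}(\cX)$, every monomial divisible by some $\ini(f_i)$ is also divisible by the leading monomial of an element of the ideal. Taking complements gives
\[
\Delta(\operatorname{I}(\cX)) \subseteq \Delta(f_1,\ldots,f_s).
\]
Second, by Remark~\ref{26.03.15}, the quotient $\F_q[{\bm x}]/\operatorname{I}(\cX)$ is isomorphic to $\F_q^n$ as a vector space, so it has dimension $n$. By Remark~\ref{26.03.16}, the classes of the monomials in $\Delta(\operatorname{I}(\cX))$ form a basis of this quotient. Hence $|\Delta(\operatorname{I}(\cX))| = n$.

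With these two facts, both directions are short. If $\{f_1,\ldots,f_s\}$ is a Gröbner basis, then by the characterization stated just before Remark~\ref{26.03.16} we have $\Delta(f_1,\ldots,f_s)=\Delta(\operatorname{I}(\cX))$, which has $n$ elements. Conversely, suppose $|\Delta(f_1,\ldots,f_s)|=n$. Then $\Delta(\operatorname{I}(\cX))$ is a subset of $\Delta(f_1,\ldots,f_s)$ of the same finite cardinality $n$, so the two sets coincide. The same characterization then shows that $\{f_1,\ldots,f_s\}$ is a Gröbner basis.

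The only delicate point is applying Remark~\ref{26.03.16} to the full quotient ring rather than to a degree-truncated piece $\F_q[{\bm x}]_{\leq r}/I$. I will use the standard division-algorithm fact that the footprint monomials of $I$ give a basis of $\F_q[{\bm x}]/I$; the truncated statements are compatible with this because, once $r$ is large, every footprint monomial has degree at most $r$. Finiteness of $\Delta(f_1,\ldots,f_s)$ is not needed as a separate assumption: in the converse direction it is given by the hypothesis, and in the forward direction it follows from $|\Delta(\operatorname{I}(\cX))|=n$.
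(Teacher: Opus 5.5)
Your proof is correct and follows the route the paper indicates. The paper gives no separate proof and obtains the lemma "by combining Remarks~\ref{26.03.15} and~\ref{26.03.16}" to get $|\Delta(\operatorname{I}(\cX))|=n$, then uses the inclusion $\Delta(\operatorname{I}(\cX))\subseteq\Delta(f_1,\ldots,f_s)$ and the footprint characterization of Gr\"obner bases, exactly as you do; your comment about reading Remark~\ref{26.03.16} for the full quotient rather than a degree-truncated piece is a sensible clarification of the paper's wording.
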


By definition, a Gr\"{o}bner basis depends on a fixed monomial order $\prec$. If a set $\{f_1, \ldots, f_s\}$ is a Gr\"{o}bner basis for an ideal $I$ and for any monomial order, then the set is called a {\bf universal Gr\"{o}bner basis} for $I$. Next, we claim that the generators described in Equation~\eqref{JB} form a universal Gr\"{o}bner basis for the ideal $J_{B(m)}$.
\begin{theorem}\label{t:grobner_jb}
The generators of the ideal $J_{B(m)}$ shown in Equation~\eqref{JB} form a universal Gr\"{o}bner basis. Moreover, for any monomial order, the footprint of $J_{B(m)}$ is given by
    \begin{equation*}
        \Delta(J_{B(m)}) = \{x_1^{a_1}\cdots x_n^{a_n}  \ : \ a_1+\dots+a_n < \ell\}.
    \end{equation*}
\end{theorem}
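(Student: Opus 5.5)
The plan is to apply Lemma~\ref{l:footprint_size_n}, with the point set $\cX = B(m)$. By Theorem~\ref{t:vanishing_jb} we already know that $\operatorname{I}(B(m)) = J_{B(m)}$, and that this ideal is generated by the polynomials
\[
f_{\bm j} := \prod_{i=0}^{j_1-1}(x_1-b_i)\cdots\prod_{i=0}^{j_m-1}(x_m-b_i), \qquad j_1+\cdots+j_m=\ell .
\]
So it suffices to show that, for an arbitrary monomial order $\prec$, the footprint $\Delta(\{f_{\bm j}\})$ has exactly $|B(m)| = \binom{m+\ell-1}{m}$ elements. Since the order is arbitrary, this gives universality directly.

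The first step is to compute the leading monomials. Each $f_{\bm j}$ is a product of univariate polynomials $\prod_{i<j_r}(x_r-b_i)$, each monic of degree $j_r$ in its variable. Every monomial appearing in $f_{\bm j}$ has the form $x_1^{c_1}\cdots x_m^{c_m}$ with $c_r\le j_r$ for all $r$. Hence every monomial of $f_{\bm j}$ divides $x_1^{j_1}\cdots x_m^{j_m}$, and $x_1^{j_1}\cdots x_m^{j_m}$ itself appears with coefficient $1$.

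For any monomial order, a proper divisor $M'$ of a monomial $M$ satisfies $M'\prec M$. Indeed, $1\prec M/M'$ because $1$ is the least monomial, and multiplying by $M'$ preserves the order. Therefore
\[
\ini(f_{\bm j}) = x_1^{j_1}\cdots x_m^{j_m}
\]
regardless of $\prec$. This independence from the order is the key step, and it is immediate once stated.

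Next, the leading monomials $\{\ini(f_{\bm j})\}$ are exactly the monomials of total degree $\ell$. A monomial is divisible by one of them if and only if its total degree is at least $\ell$: any monomial of degree $\ge \ell$ has a degree-$\ell$ divisor, obtained by lowering exponents. Thus
\[
\Delta(f_{\bm j}\,:\,j_1+\cdots+j_m=\ell) = \{x_1^{a_1}\cdots x_m^{a_m} : a_1+\cdots+a_m<\ell\}.
\]
This set has $\binom{m+\ell-1}{m}$ elements, which equals $|B(m)|$ by the Remark counting points of the simplex.

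Finally, Lemma~\ref{l:footprint_size_n} shows that the generators form a Gröbner basis for $\prec$. Since $\prec$ was arbitrary, they form a universal Gröbner basis, and $\Delta(J_{B(m)})$ equals the displayed set. (The statement writes $x_n$ where $x_m$ is meant.) There is no real obstacle here; the only points requiring care are the order-independence of the leading monomial and matching the footprint count with $|B(m)|$.
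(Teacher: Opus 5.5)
Your proposal is correct and follows the paper's proof. You show that the leading monomials are $x_1^{j_1}\cdots x_m^{j_m}$ independently of the order, identify the footprint of the generators with the monomials of degree less than $\ell$, match its size with $|B(m)|$, and conclude via Lemma~\ref{l:footprint_size_n}; your justification that a proper divisor is smaller in every monomial order supplies a step the paper only asserts.
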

\begin{proof}
Let $g_1, \dots, g_n$ be the generators of $J_{B(m)}$ that are shown in Equation~\eqref{JB}. For any monomial order, the leading terms are
\begin{equation*}
\{\ini(g_1), \dots, \ini(g_n) \} = \{x_1^{i_1}  \cdots  x_m^{i_m} \ : \ i_1 + \dots + i_m = \ell  \}.
\end{equation*}
Thus, for any monomial order $\prec$, we have that
\[
\Delta(g_1,\ldots,g_n) = \{x_1^{i_1}  \cdots  x_m^{i_m} \ : \ i_1 + \dots + i_m < \ell \}.
\]
By definition of $B(m)$, we get $|B(m)| = |\Delta(g_1,\ldots,g_n)|$. Therefore, we obtain the result by Lemma~\ref{l:footprint_size_n}. 
\end{proof}
As a consequence of Theorem~\ref{t:grobner_jb}, we see that the footprint of the ideal $J_{B(m)}$ depends only on $|B|$ and $m$.
\begin{example}
Let $B = \{b_0, b_1, b_2, b_3, b_4\}$ be an ordered subset of $\mathbb{F}_9$. The vanishing ideal of the combinatorial simplex $B(2)$ is given by
\[
    J_{B(2)} = \left\langle \prod_{i=0}^{j_1-1}(x - b_i)\prod_{i=0}^{j_2-1}(y - b_i) \ : \ j_1 + j_2 = 5 \right\rangle \subseteq \mathbb{F}_9[x,y],
\]
whose generators form a universal Gr\"obner basis with leading terms
\[
    \{\operatorname{in}(g_1), \dots, \operatorname{in}(g_n)\} = \{x^5, x^4y, x^3y^2, x^2y^3, xy^4, y^5\}.
\]
Consequently, the footprint of $J_{B(2)}$ is given by all monomials of degree at most $4$, i.e.,
\[\Delta(J_{B(2)}) = \{x^iy^j \ : \ i + j \leq 4\},\]
which forms a simplex on the monomial grid; see Figure~\ref{26.04.20}.

\begin{figure}[h!]
\begin{tikzpicture}[scale=0.60, every node/.style={font=\tiny}]
\gridcell{0}{0}{cellempty}{$1$}
\gridcell{1}{0}{cellempty}{$x$}
\gridcell{2}{0}{cellempty}{$x^2$}
\gridcell{3}{0}{cellempty}{$x^3$}
\gridcell{4}{0}{cellempty}{$x^4$}
\gridcell{0}{1}{cellempty}{$y$}
\gridcell{1}{1}{cellempty}{$xy$}
\gridcell{2}{1}{cellempty}{$x^2\!y$}
\gridcell{3}{1}{cellempty}{$x^3\!y$}
\gridcell{0}{2}{cellempty}{$y^2$}
\gridcell{1}{2}{cellempty}{$xy^2$}
\gridcell{2}{2}{cellempty}{$x^2\!y^2$}
\gridcell{0}{3}{cellempty}{$y^3$}
\gridcell{1}{3}{cellempty}{$xy^3$}
\gridcell{0}{4}{cellempty}{$y^4$}
 \end{tikzpicture}
 \caption{The footprint of $J_{B(m)}$ on the monomial grid.}
    \label{26.04.20}
\end{figure}

\end{example}

From Theorem \ref{t:grobner_jb}, we can directly obtain the dimension of CAP codes.

\begin{corollary}\label{c:dimension}
    The dimension of the code $\CAP(B,m,\nu)$ is
    \begin{equation*}
       \operatorname{dim}(\CAP(B,m,\nu)) = {m + \nu \choose m}=\binom{m+\nu}{\nu}.
    \end{equation*}
\end{corollary}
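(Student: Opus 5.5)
The plan is to compute the dimension as the rank of the evaluation map restricted to $\F_q[{\bm x}]_{\leq \nu}$. By definition, $\CAP(B,m,\nu)$ is the image of $\F_q[{\bm x}]_{\leq \nu}$ under ${\ev}$ at the points of $B(m)$. Hence
\[
\dim \CAP(B,m,\nu) = \dim_{\F_q} \F_q[{\bm x}]_{\leq \nu} - \dim_{\F_q}\bigl(\F_q[{\bm x}]_{\leq \nu}\cap \operatorname{I}(B(m))\bigr).
\]
The space $\F_q[{\bm x}]_{\leq \nu}$ has as basis the monomials of total degree at most $\nu$ in $m$ variables. There are $\binom{m+\nu}{m}$ of them, by the usual stars-and-bars count. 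So it suffices to show that the intersection is zero.

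I will work in the natural range $\nu \leq \ell-1$, which is implicit in the statement. For $\nu \geq \ell$, Theorem~\ref{t:grobner_jb} and Remark~\ref{26.03.15} show that ${\ev}$ is already surjective from polynomials of degree at most $\ell-1$. In that case the code is all of $\F_q^n$, with $n=\binom{m+\ell-1}{m}$, and the formula must be read with $\nu$ replaced by $\min(\nu,\ell-1)$. I would state this caveat explicitly.

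The key step is that no nonzero polynomial of degree at most $\ell-1$ lies in $\operatorname{I}(B(m))$. By Theorem~\ref{t:vanishing_jb}, $\operatorname{I}(B(m))=J_{B(m)}$. By Theorem~\ref{t:grobner_jb}, the generators in Equation~\eqref{JB} form a Gr\"obner basis for every monomial order, and their leading monomials are exactly the monomials of total degree $\ell$. I fix a degree-compatible order, such as graded lexicographic. Let $f\in J_{B(m)}$ be nonzero. Its leading monomial $\ini(f)$ is divisible by some monomial of degree $\ell$, so $\deg \ini(f)\geq \ell$. Because the order is graded, $\deg f=\deg \ini(f)\geq \ell>\nu$. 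Therefore $\F_q[{\bm x}]_{\leq \nu}\cap J_{B(m)}=\{0\}$, and ${\ev}$ is injective on $\F_q[{\bm x}]_{\leq\nu}$.

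An equivalent route uses the footprint. Every monomial of degree at most $\nu<\ell$ lies in $\Delta(J_{B(m)})$ by Theorem~\ref{t:grobner_jb}. By Remark~\ref{26.03.16}, their classes are linearly independent modulo the ideal, and Remark~\ref{26.03.15} transfers this independence to their evaluation vectors.

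The only genuine point needing care is the choice of a graded monomial order. For a non-graded order, a low-degree $f$ could in principle have a high-degree leading term. The universality of the Gr\"obner basis is exactly what lets me choose the graded order freely. The final identity $\binom{m+\nu}{m}=\binom{m+\nu}{\nu}$ is just the symmetry of binomial coefficients.
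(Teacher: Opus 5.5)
Your proof is correct and is essentially the paper's argument. The paper likewise uses Theorem~\ref{t:grobner_jb} to place every monomial of degree at most $\nu<\ell$ in the footprint $\Delta(J_{B(m)})$, deduces independence of their evaluations from Remark~\ref{26.03.16}, and counts them by stars and bars; your graded-order phrasing is just an equivalent way of saying $J_{B(m)}$ has no nonzero element of degree below $\ell$, and your caveat is right, since the formula fails for $\nu\geq\ell$ and the paper's proof also silently assumes $\nu<\ell$.
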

\begin{proof}
By Theorem \ref{t:grobner_jb}, any polynomial of degree at most $\nu <\ell$ can be generated by monomials in $\Delta(J_{B(m)})$, whose evaluations are linearly independent by Remark \ref{26.03.16}. The statement follows using a stars and bars argument to compute all the monomials of degree at most $\nu $. 
\end{proof}

\section{Generalized Hamming weights}\label{s:ghws}
In this section, we use Gr\"{o}bner basis tools to find the minimum distance and, more generally, the generalized Hamming weights of a CAP code. Let $f$ be an element in $\F_q[{\bm x}]$. Denote the set of zeros of $f$ in $\cX$ by $V_{\cX}(f)$. Note that the Hamming weight of the element $f(\cX):=\left(f(P_1),\ldots,f(P_n)\right)$ is given by
\[\wH(f(\cX))=n-|V_{\cX}(f)|.\]
Similarly, given any subcode $D\subset \CAP(B,m,\nu)$ with $\dim D=r$, there are $f_1,\dots,f_r$ in $\fq[\xx]_{\leq \nu}$ such that $D=\langle f_1(\cX),\dots,f_r(\cX)\rangle$. Since $\{ f_1(\cX),\dots,f_r(\cX)\}$ is linearly independent, we may assume that $\ini(f_1)\succ \cdots \succ \ini(f_r)$. If we denote $F:=\{f_1,\dots,f_r\} $, then 
\begin{equation}\label{eq:support_variety}
\abs{\supp(D)}=n-\abs{V_{\cX}(F)},
\end{equation}
where $V_{\cX}(F)$ denotes the common zeroes of the polynomials of $F$ in $\cX$. Moreover, we can also see that $V_{\cX}(F)$ is the set of the zeros of the ideal $\operatorname{I}(\cX)+(F) \subset \F_q[{\bm x}]$. The next proposition, known as the footprint bound, allows us to relate the support of the subspace associated with a set of polynomials $F$ with the footprint of a certain ideal. This is the coding-theoretic version of a classical result in algebraic geometry; see \cite[Theorem 6 and Proposition 7, Chapter 5 \textsection 3]{cox}. For any subset $F\subset \fq[\xx]$, we denote $\ini(F):=\{\ini(f) \ : \ f\in F\}$. We denote by $\binom{S}{r}$ the set of all subsets of $S$ with size $r$.

\begin{proposition}\label{p:footprint_bound} 
Let $G=\{g_1,\ldots,g_t\}\subset \operatorname{I}(\cX)$ be a Gröbner basis
for a monomial order $\prec$. For any $F=\{f_1,\dots,f_r\} \subset \F_q[\xx]$, we have that
\[
|V_{\cX}(F)| \leq |\Delta(\ini(G), \ini(F))|.
\]
If $D=\langle f_1(\cX),\dots,f_r(\cX)\rangle$ has dimension $r$, then 
\[\abs{\supp(D)} \geq n-|\Delta(\ini(G), \ini(F))|.\]
\end{proposition}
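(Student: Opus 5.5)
The plan is to reduce the bound to a count of standard monomials in the quotient ring. Let $I:=\operatorname{I}(\cX)+(F)$. As noted just before the statement, $V_{\cX}(F)$ is the zero set of $I$, and $V_{\cX}(F)\subseteq \cX$ is a finite set of points of $\F_q^m$. Since $I$ contains $\operatorname{I}(\cX)$, and hence the field equations $x_i^q-x_i$ (Proposition~\ref{26.03.14}), the ideal $I$ is radical. By the Affine $\F_q$-Nullstellensatz (Proposition~\ref{26.03.14}) applied to $J=I$, we get
\[
\operatorname{I}(V_{\cX}(F))=I+\langle x_1^q-x_1,\dots,x_m^q-x_m\rangle=I .
\]
Remarks~\ref{26.03.15} and~\ref{26.03.16}, applied to the point set $V_{\cX}(F)$, then give $|V_{\cX}(F)|=\dim_{\F_q}\F_q[\xx]/I=|\Delta(I)|$.

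Next I compare footprints. Since $G\cup F\subseteq I$, every monomial $\ini(g)$ with $g\in G$ and every $\ini(f)$ with $f\in F$ is the leading monomial of an element of $I$. Hence any monomial divisible by one of them is not in $\Delta(I)$. Equivalently,
\[
\Delta(I)\subseteq \Delta(\ini(G),\ini(F)),
\]
which yields the first inequality.

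For the second statement, combine this with Equation~\eqref{eq:support_variety}, which says $|\supp(D)|=n-|V_{\cX}(F)|$ when $D=\langle f_1(\cX),\dots,f_r(\cX)\rangle$ has dimension $r$. Substituting the first inequality gives $|\supp(D)|\geq n-|\Delta(\ini(G),\ini(F))|$.

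The only delicate step is justifying $|V_{\cX}(F)|=|\Delta(I)|$, that is, that $I$ is exactly the vanishing ideal of $V_{\cX}(F)$. This follows cleanly from Proposition~\ref{26.03.14} because the field equations already lie in $\operatorname{I}(\cX)\subseteq I$. One could instead prove just the inequality $|V_{\cX}(F)|\leq|\Delta(I)|$ directly: the evaluation map $\F_q[\xx]/I\to\F_q^{|V_{\cX}(F)|}$ is well defined and surjective by Lagrange interpolation over $\F_q$. That inequality alone suffices for the statement.
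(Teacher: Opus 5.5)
Your proof is correct and is essentially the standard footprint-bound argument that the paper invokes by citing Cox--Little--O'Shea (Chapter 5, \S3). You bound $|V_{\cX}(F)|$ by $\dim_{\F_q}\F_q[\xx]/(\operatorname{I}(\cX)+(F))=|\Delta(\operatorname{I}(\cX)+(F))|$, use $\Delta(\operatorname{I}(\cX)+(F))\subseteq\Delta(\ini(G),\ini(F))$, and finish with Equation~\eqref{eq:support_variety}. Working directly over $\F_q$ via the field equations and Proposition~\ref{26.03.14} makes the argument self-contained and even gives the equality $|V_{\cX}(F)|=|\Delta(\operatorname{I}(\cX)+(F))|$; as your argument shows, the inequality needs only $G\subseteq\operatorname{I}(\cX)$, not that $G$ is a Gr\"obner basis.
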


\begin{remark}
If we apply Proposition \ref{p:footprint_bound} to the CAP code $\CAP(B,m,\nu)$, we obtain

\begin{equation}\label{eq:footprint_bound}
d_r(\CAP(B,m,\nu))\geq \min \left\{\abs{B(m)}- |\Delta(J_{B(m)})\cap \Delta(M)| \ : \ M \in \binom{\cM}{r} \right\}.
\end{equation}
\end{remark}

The following result shows that the footprint bound, i.e., Equation~\eqref{eq:footprint_bound} is sharp. 
\begin{theorem}\label{t:footprint_sharp}
Let $1\leq r \leq \dim  {m + \nu \choose m}$. Then, 
$$
d_r(\CAP(B,m,\nu))=\min \left\{|\Delta(J_{B(m)})\setminus \Delta(M)| \ : \ M \in \binom{\cM}{r} \right\}.
$$
\end{theorem}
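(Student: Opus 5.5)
The plan is to prove the two inequalities separately. Throughout, $M$ ranges over $r$-subsets of monomials of degree at most $\nu$, since these are the only possible leading monomials of elements of $\F_q[\xx]_{\leq \nu}$. Since $\nu<\ell$, every such monomial lies in $\Delta(J_{B(m)})$ by Theorem~\ref{t:grobner_jb}. I read the statement with this restriction on $M$: if $M$ is allowed to contain monomials of degree greater than $\nu$, the minimum could be strictly smaller than $d_r$, so the statement only makes sense with the restriction.

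\emph{Lower bound.} Let $D$ be a subcode of dimension $r$, generated by $f_1(\cX),\dots,f_r(\cX)$. By Gaussian elimination on the coefficient vectors in the basis of $\Delta(J_{B(m)})$ (Remark~\ref{26.03.16}), I may assume the leading monomials $\ini(f_i)$ are pairwise distinct. Set $M=\ini(F)$. Proposition~\ref{p:footprint_bound}, with $G$ the universal Gr\"obner basis of Theorem~\ref{t:grobner_jb}, gives
\[
\abs{\supp(D)}\ge |B(m)|-|\Delta(J_{B(m)})\cap\Delta(M)| .
\]
Since $|B(m)|=|\Delta(J_{B(m)})|$, the right-hand side equals $|\Delta(J_{B(m)})\setminus\Delta(M)|$. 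This is precisely inequality~\eqref{eq:footprint_bound}.

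\emph{Upper bound (sharpness).} For each admissible $M$, I will build a subcode attaining the bound. For a monomial $x^{\mathbf a}=x_1^{a_1}\cdots x_m^{a_m}\in M$, define
\[
f_{\mathbf a}:=\prod_{j=1}^m\prod_{i=0}^{a_j-1}(x_j-b_i).
\]
This polynomial has degree $|\mathbf a|\le\nu$, so its evaluation lies in $\CAP(B,m,\nu)$, and its leading monomial is $x^{\mathbf a}$ for every monomial order.

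\emph{Linear independence.} The $f_{\mathbf a}$ are combinations of monomials of degree at most $\nu<\ell$, all of which lie in $\Delta(J_{B(m)})$. Their leading monomials are distinct, so they are linearly independent modulo $J_{B(m)}$. By Remark~\ref{26.03.15}, their evaluations therefore span a subcode $D$ of dimension exactly $r$.

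\emph{Counting common zeros.} This is the key step. Use the bijection between points of the simplex and exponents,
\[
(b_{i_1},\dots,b_{i_m})\leftrightarrow x^{\mathbf i}, \qquad |\mathbf i|<\ell,
\]
which maps $B(m)$ onto $\Delta(J_{B(m)})$. The polynomial $f_{\mathbf a}$ vanishes at the point $b_{\mathbf i}$ if and only if $i_j<a_j$ for some $j$, that is, if and only if $x^{\mathbf a}\nmid x^{\mathbf i}$. Hence $b_{\mathbf i}$ is a common zero of all $f_{\mathbf a}$ with $x^{\mathbf a}\in M$ exactly when $x^{\mathbf i}$ is divisible by no element of $M$, i.e.\ when $x^{\mathbf i}\in\Delta(M)$. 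Consequently
\[
|V_{\cX}(F)|=|\Delta(J_{B(m)})\cap\Delta(M)|,
\]
and by~\eqref{eq:support_variety}, $\abs{\supp(D)}=|\Delta(J_{B(m)})\setminus\Delta(M)|$.

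Taking the minimum over $M$ in both directions gives equality. The only delicate points are the restriction of $M$ to monomials of degree at most $\nu$ and the independence argument, both of which rely on $\nu<\ell$. I expect no step to be a serious obstacle; the counting of common zeros via the point–exponent bijection is the heart of the argument.
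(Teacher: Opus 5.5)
Your proof is correct and follows the paper's argument: the footprint bound gives the lower inequality, and sharpness uses the same product polynomials $\prod_{j}\prod_{i<a_j}(x_j-b_i)$ together with the point--exponent bijection to show that the non-zeros correspond exactly to $\Delta(J_{B(m)})\setminus\Delta(M)$. You also make precise two points the paper leaves implicit: restricting $M$ to $r$ monomials of degree at most $\nu$ (without this, e.g.\ $M=\{x_1^{\ell}\}$ would give $0$), and checking that the evaluations are linearly independent, so that $\dim D=r$.
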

\begin{proof}
Since $\abs{\Delta(J_{B(m)})}=\abs{B(m)}$, for a given $M=\{\xx^{\bf \alpha^{(1)}},\dots,\xx^{\alpha^{(r)}}\} \in\binom{\cM}{r}$, we have $\abs{B(m)}- |\Delta(J_{B(m)})\cap \Delta(M)|=|\Delta(J_{B(m)})\setminus \Delta(M)|$. Therefore, the bound from Equation~\eqref{eq:footprint_bound} becomes
$$
d_r(\CAP(B,m,\nu))\geq \min \left\{|\Delta(J_{B(m)})\setminus \Delta(M)| \ : \ M \in\binom{\cM}{r} \right\}.
$$
To prove the reverse inequality, we show that for any monomial set $M=\{\xx^{\bf \alpha^{(1)}}, \dots, \xx^{\alpha^{(r)}}\} \in \binom{\cM}{r}$, we can find a set of polynomials $F=\{f_1,\dots,f_r\}$ such that $\ini(f_i)=\xx^{\alpha^{(i)}}$, for $1\leq i \leq r$, and 
\begin{equation}\label{eq:equality_cardinalities}
|B(m)\setminus V_{B(m)}(F)|=\abs{B(m)}-\abs{V_{B(m)}(F)}=|\Delta(J_{B(m)})\setminus  \Delta(M)|.
\end{equation}
Such a set of polynomials would complete the proof; see Equation~\eqref{eq:support_variety}.

Define the set \[\simplex(m,\ell) := \{ \beta \in \mathbb{N}^m : \sum_{i=1}^m \beta_i<\ell \}. \]  Consider the bijection
\begin{equation}\label{eq:varphi}
\begin{array}{lccc}
\varphi: & \simplex(m,\ell)  & \to & B(m), \\
&(i_1,\dots,i_m)  & \mapsto & (b_{i_1},\dots,b_{i_m}).
\end{array}
\end{equation}

For every $\mathbf{b}=(b_{i_1},\dots,b_{i_m})\in B(m)$, we take the polynomial
$$
f_{\mathbf{b}} := \prod_{i=0}^{j_1-1}(x_1-b_i)\cdots \prod_{i=0}^{j _m-1}(x_m-b_i).
$$
We now define the set $F := \{ f_{\varphi(\alpha^{(i)})},\; 1\leq i \leq r\}$. We claim that $P\in B(m)\setminus V_{B(m)}(F)$ if and only if $\xx^{\varphi^{-1}(P)} \in \Delta(J_{B(m)}) \setminus \Delta(M)$. Let $P\in B(m)$. Note that $P\not\in V_{B(m)}(f_{\varphi(\alpha^{(i)})})$ if and only if $\varphi^{-1}(P)\geq \alpha^{(i)}$, 
where we are considering the partial order in $\mathbb{N}^m$. Since $V_{B(m)}(F)=\bigcap_{i=1}^r V_{B(m)}(f_{\varphi(\alpha^{(i)})})$, then $P\not\in V_{B(m)}(F)$ if and only if, for some $1\leq i \leq r$, we have $\varphi^{-1}(P)\geq \alpha^{(i)}$. By the definition of $\Delta(M)=\Delta(\xx^{\bf \alpha^{(1)}},\dots,\xx^{\alpha^{(r)}})$, we have just proved that $P\in B(m)\setminus V_{B(m)}(F)$ if and only if $\xx^{\varphi^{-1}(P)} \in \Delta(J_{B(m)})\setminus \Delta(M)$
,which implies the equality in Equation~\eqref{eq:equality_cardinalities} and finishes the proof. 
\end{proof}

As a consequence of Theorem~\ref{t:footprint_sharp}, we recover the minimum distance of CAP codes, which was determined in ~\cite{kopparty_high_rate}.

\begin{corollary}\label{min_distance}
We have
$$
d_1(\CAP(B,m,\nu))=\binom{\ell-\nu+m-1}{m}.
$$
\end{corollary}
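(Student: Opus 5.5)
Apply Theorem~\ref{t:footprint_sharp} with $r=1$: then
\[
d_1(\CAP(B,m,\nu))=\min\{|\Delta(J_{B(m)})\setminus\Delta(\xx^\alpha)|\},
\]
where $\xx^\alpha$ ranges over the leading monomials of nonzero polynomials of degree at most $\nu$. In the proof of Theorem~\ref{t:footprint_sharp} the relevant monomials $M$ are leading monomials of elements of $\F_q[\xx]_{\leq\nu}$, so we restrict to $|\alpha|\leq\nu$; since $\nu<\ell$, every such $\alpha$ lies in $\simplex(m,\ell)$. By Theorem~\ref{t:grobner_jb}, $\Delta(J_{B(m)})$ is the set of monomials of total degree $<\ell$. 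Moreover $\Delta(J_{B(m)})\setminus\Delta(\xx^\alpha)$ is exactly the set of multiples $\xx^{\alpha+\gamma}$ with $|\alpha|+|\gamma|<\ell$. Hence the quantity to minimize is
\[
N(\alpha):=\#\{\gamma\in\N^m:|\gamma|\leq \ell-1-|\alpha|\}=\binom{\ell-1-|\alpha|+m}{m}.
\]

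Next I use the stars and bars count from Corollary~\ref{c:dimension}: the number of $\gamma\in\N^m$ with $|\gamma|\le t$ is $\binom{t+m}{m}$. This expression is decreasing in $|\alpha|$, so the minimum is attained at $|\alpha|=\nu$, for instance $\alpha=(\nu,0,\dots,0)$. That choice is admissible, since $x_1^\nu$ is a leading monomial of a degree-$\nu$ polynomial. The minimum is therefore
\[
\binom{\ell-1-\nu+m}{m}=\binom{\ell-\nu+m-1}{m},
\]
which is the claimed formula.

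The only delicate step is justifying that the minimum in Theorem~\ref{t:footprint_sharp} ranges only over monomials of degree at most $\nu$ that lie in the footprint, rather than over all of $\cM$ as literally written. I would handle this by noting two facts. First, for $f\in\F_q[\xx]_{\le\nu}$ we may reduce modulo the universal Gröbner basis without increasing the degree, because the footprint is degree-closed and the generators have degree exactly $\ell>\nu$. Second, the leading monomial of a nonzero reduced $f$ lies in $\Delta(J_{B(m)})$ and has degree $\le\nu$. Conversely, the construction in the proof of Theorem~\ref{t:footprint_sharp} uses $f_{\varphi(\alpha)}$, of degree $|\alpha|\le\nu$, so it stays inside the code. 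With the minimization set identified in this way, the remaining monotonicity computation is routine.
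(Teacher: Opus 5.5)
Your proof is correct and follows essentially the same route as the paper: apply Theorem~\ref{t:footprint_sharp} with $r=1$, identify $\Delta(J_{B(m)})\setminus\Delta(\xx^\alpha)$ with the multiples of $\xx^\alpha$ of total degree $<\ell$, and count these by stars and bars at $|\alpha|=\nu$. The paper reduces to $|\alpha|=\nu$ via divisibility rather than via monotonicity of $\binom{\ell-1-|\alpha|+m}{m}$, which is a cosmetic difference, and your explicit justification that the minimum ranges only over monomials of degree at most $\nu<\ell$ makes precise a restriction the paper's proof uses tacitly.
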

\begin{proof}
By Theorem \ref{t:footprint_sharp}, we have
$$
d_r(\CAP(B,m,\nu))=\min \left\{|\Delta(J_{B(m)})\setminus \Delta(\xx^\alpha)|:\xx^\alpha \in \cM \right\}.
$$
Since $\xx^\beta \mid \xx^\alpha$ implies $\abs{\Delta(\xx^\beta)}\leq \abs{\Delta(\xx^\alpha)}$, the monomials $\xx^\alpha$ with the lowest value for $|\Delta(J_{B(m)})\setminus \Delta(\xx^\alpha)|$ are those with $\sum_{i=1}^m\alpha_i=\nu$. However, for any such monomial, we have
$$
|\Delta(J_{B(m)})\setminus \Delta(\xx^\alpha)|=\binom{\ell-\nu+m-1}{m}.
$$
Indeed, $\xx^\beta\in \Delta(J_{B(m)})\setminus \Delta(\xx^\alpha)$ if and only if $\beta \geq \alpha$ and $\sum_{i=1}^m\beta_i<\ell$. If we define $\gamma=\beta -\alpha$, this is the same as counting how many tuples $\gamma$ we have with $\gamma \geq 0$ and $\sum_{i=1}^m\gamma_i<\ell -\nu$, which is precisely $\binom{\ell-\nu+m-1}{m}$. 
\end{proof}

\begin{example}
Let $B = \{0, 1, 2\} \subseteq \mathbb{F}_q$ with $q \geq 3$, and consider the code $\mathrm{CAP}(B, 3, 1)$, of length $n = 10$ and dimension $k= 4$. By Theorem~\ref{t:footprint_sharp}, we compute
\[
d_r(\mathrm{CAP}(B, 3, 1)) = \min\left\{ |\Delta(J_{B(3)}) \setminus \Delta(M)| : M \subseteq \{1, x_1, x_2, x_3\},\ |M| = r \right\},
\]
where $\Delta(J_{B(3)}) = \{ x_1^{a_1} x_2^{a_2} x_3^{a_3} : a_1 + a_2 + a_3 < 3 \}$ has $10$ elements. The minimum is attained by the choices in the table below, giving the full generalized Hamming weight hierarchy.
\begin{center}
\begin{tabular}{c|c|c|c}
$r$ & optimal $M$ & $|\Delta(J_{B(3)}) \setminus \Delta(M)|$ & $d_r$ \\ \hline
$1$ & $\{x_1\}$ & $4$ & $4$ \\
$2$ & $\{x_1, x_2\}$ & $7$ & $7$ \\
$3$ & $\{x_1, x_2, x_3\}$ & $9$ & $9$ \\
$4$ & $\{1, x_1, x_2, x_3\}$ & $10$ & $10$
\end{tabular}
\end{center}
Geometrically, an optimal $r$-dimensional subcode is spanned by the evaluations of $r$ of the coordinate functions: for instance, $d_3$ is achieved by $D = \mathrm{span}\{x_1(B(3)), x_2(B(3)), x_3(B(3))\}$, whose support omits only the origin $(0,0,0) \in B(3)$. Note that $d_1 = 4$ also agrees with Corollary~\ref{min_distance}, since $\binom{\ell - \nu + m - 1}{m} = \binom{4}{3} = 4$. These results can be checked using \cite{sanjoseGHWsPackage,githubGHWs}.
\end{example}

\section{Dual codes}\label{s:dual}
In this section, we give an explicit description of the dual of a CAP code. Recall that $B = \left\{b_0, b_1, \dots, b_{\ell-1} \right\}$ is an ordered subset of $\F_q$. We denote by $B^m$ the Cartesian product $B \times \cdots \times B$ with $m$ entries and by $B^c(m)$ the complement of $B(m)$ in $B^m$, i.e., $B^c(m):=B^m\setminus B(m)$.

To describe the dual of a CAP code, we start by finding the vanishing ideal of $B^c(m)$. 
\begin{proposition}\label{p:gb_dual}
Let $B = \left\{b_0, b_1, \dots, b_{\ell-1} \right\}$ be an ordered subset of $\F_q$. A universal Gr\"obner basis for the vanishing ideal $\operatorname{I}(B^c(m))$ is given by
$$
G=\set{\prod^{\ell - 1}_{i = j_1+1}(x_1 -b_i) {\cdots} \prod^{\ell - 1}_{i = j_m+1}(x_m -b_i),  \  \prod^{\ell - 1}_{i=0} (x_j -b_i)  \ : \ j_1 + \dots + j_m = \ell-1, \ j \in [m]}.
$$
The initial ideal of $\operatorname{I}(B^c(m))$, that depends only on $\ell = |B|$ and $m$, is given by
$$
\ini(\operatorname{I}(B^c(m)))=\langle x_1^{\ell-1 -j_1}\cdots x_m^{\ell-1-j_m},\; j_1+\cdots+j_m=\ell-1 \rangle+\langle x_1^{\ell},\dots,x_m^{\ell} \rangle.
$$
\end{proposition}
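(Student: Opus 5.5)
The plan is to mirror the proof of Theorem~\ref{t:grobner_jb}: show that $G\subset \operatorname{I}(B^c(m))$, show that $G$ generates $\operatorname{I}(B^c(m))$, and then count the footprint of $\ini(G)$ and compare it with $|B^c(m)|$ via Lemma~\ref{l:footprint_size_n}.

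First, $G$ vanishes on $B^c(m)$. The products $\prod_{i=0}^{\ell-1}(x_j-b_i)$ vanish on all of $B^m$. Now take $P=(b_{i_1},\dots,b_{i_m})\in B^c(m)$, so $i_1+\cdots+i_m\geq \ell$, and take $j_1+\cdots+j_m=\ell-1$. Then some $r$ has $i_r>j_r$, since otherwise $\sum i_r\leq \ell-1$. Hence $x_r-b_{i_r}$ is a factor of $\prod_{i=j_r+1}^{\ell-1}(x_r-b_i)$, and the generator vanishes at $P$.

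Second, $G$ generates the vanishing ideal. For this I would use the symmetry of the construction rather than the Nullstellensatz directly. Reversing the order of $B$, set $b'_k:=b_{\ell-1-k}$. Under this relabeling, $P=(b_{i_1},\dots,b_{i_m})\in B^c(m)$ if and only if its reversed indices $i'_r=\ell-1-i_r$ satisfy $\sum i'_r\leq m(\ell-1)-\ell$. The generator $\prod_{i=j_r+1}^{\ell-1}(x_r-b_i)$ becomes $\prod_{k=0}^{\ell-2-j_r}(x_r-b'_k)$, which has degree $\ell-1-j_r$, and these degrees sum to $m(\ell-1)-(\ell-1)$.

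The ideal generated by $G$ is then handled by the same argument as Lemma~\ref{26.03.13}, inside the grid $B^m$. Suppose $P\in B^m$ and all $i_r\leq j_r$ for some $j$ with $\sum j_r=\ell-1$; this is exactly the condition $\sum i_r\leq \ell-1$. Then the product $\prod_r\prod_{i=j_r+1}^{\ell-1}(x_r-b_i)$ with $j_r:=i_r$ (increasing some coordinate so the sum is $\ell-1$) does not vanish at $P$. So $V(G)=B^c(m)$, where the factors $\prod_{i=0}^{\ell-1}(x_j-b_i)$ cut the variety down to $B^m$. By Proposition~\ref{26.03.14}, $\operatorname{I}(B^c(m))=\langle G\rangle+\langle x_j^q-x_j\rangle=\langle G\rangle$, because $\prod_{i=0}^{\ell-1}(x_j-b_i)$ divides $x_j^q-x_j$.

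Third, the footprint count. Each element of $G$ is a product of univariate monic polynomials, so for any monomial order its leading monomial is the product of the leading monomials of the factors. This gives $x_1^{\ell-1-j_1}\cdots x_m^{\ell-1-j_m}$ and $x_j^\ell$, independently of the order. The footprint $\Delta(\ini(G))$ is therefore the set of $x^a$ with $a_r\leq \ell-1$ for all $r$ such that there is no $j$ with $\sum j_r=\ell-1$ and $a_r\geq \ell-1-j_r$ for all $r$. Substituting $c_r=\ell-1-a_r\in[0,\ell-1]$, the condition $a_r\geq \ell-1-j_r$ becomes $c_r\leq j_r$. Such a $j$ exists if and only if $\sum c_r\leq \ell-1$. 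Thus the footprint corresponds bijectively to the $c\in[0,\ell-1]^m$ with $\sum c_r\geq \ell$, whose count is $\ell^m-|B(m)|=|B^c(m)|$.

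Lemma~\ref{l:footprint_size_n} then shows that $G$ is a Gröbner basis for every order, that is, a universal Gröbner basis, and the stated initial ideal follows from the list of leading monomials. The step I expect to need the most care is the second, namely checking that $V(G)$ is exactly $B^c(m)$ and handling the edge cases in the choice of $j$. The footprint count itself is routine once the substitution $c_r=\ell-1-a_r$ is made.
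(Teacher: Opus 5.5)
Your proposal is correct and takes essentially the same route as the paper, whose proof just says to repeat the arguments of Lemma~\ref{26.03.13}, Theorem~\ref{t:vanishing_jb} and Theorem~\ref{t:grobner_jb}: show the variety is $B^c(m)$, apply the Nullstellensatz with the field equations absorbed, then compare the footprint count with $|B^c(m)|$ via Lemma~\ref{l:footprint_size_n}. You write out those details, including the substitution $c_r=\ell-1-a_r$; the paragraph on reversing the order of $B$ is never used (you invoke Proposition~\ref{26.03.14} anyway) and can be dropped.
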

\begin{proof}
The result follows using the same arguments as in Lemma~\ref{26.03.13} and Theorems~\ref{t:vanishing_jb} and~\ref{t:grobner_jb}. Note that the ideal contains the field equations.
\end{proof}

The next step is to see that a CAP code is the puncturing of a Cartesian code, which we now introduce.
\begin{definition}
Assume that $B^m=\left\{Q_1,\ldots,Q_{\ell^m} \right\}$. The {\bf Cartesian code} of degree $\nu$ is defined by
\[
\Car(B^m,\nu) := \left\{ \left(f(Q_1),\ldots,f(Q_{\ell^m})\right) \ : \ f({\bm x}) \in \F_q[{\bm x}]_{\leq \nu}\right\}.
\]
\end{definition}
Observe that when $m=1$, a Cartesian code is a Reed-Solomon code. When $B = \mathbb{F}_q$, a Cartesian code is a Reed-Muller code. The main parameters of a Cartesian code, length, dimension, and minimum distance, are computed in~\cite{Lopez2014-bq}. The dual of a Cartesian code is given in the following result.
\begin{lemma}\label{26.05.04}
Let $B = \left\{b_0, b_1, \dots, b_{\ell-1} \right\}$ be an ordered subset of $\F_q$ and $l(x) := (x-b_0)\ldots(x-b_{\ell -1})$ the vanishing polynomial of $B$. Denote by $l^\prime(x)$ the formal derivative of $l(x)$ and define the polynomial $L(x_1,\ldots,x_m) := l^\prime(x_1) \cdots l^\prime(x_m)$. The dual of the Cartesian code $\Car(B^m,\nu)$ is given by
\[\Car(B^m,\nu)^\perp = \left\{ \left(L(Q_1)^{-1}f(Q_1),\ldots,L(Q_{\ell^m})^{-1}f(Q_{\ell^m})\right) \ : \ f({\bm x}) \in \F_q[{\bm x}]_{\leq m(\ell-1) - \nu - 1}\right\}.\]
\end{lemma}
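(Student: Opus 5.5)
Both sides of the claimed equality have the same dimension, so the plan is to prove one inclusion and then compare dimensions. Write $C=\Car(B^m,\nu)$ and $C'$ for the right-hand side. The key tool is the one-variable Lagrange identity: for any polynomial $h\in\F_q[x]$ with $\deg h\le \ell-2$,
\[\sum_{b\in B}\frac{h(b)}{l'(b)}=0.\]
This holds because $\sum_{b\in B} h(b)/l'(b)$ equals the coefficient of $x^{\ell-1}$ in the Lagrange interpolation polynomial of $h$ on $B$. That interpolation polynomial is $h$ itself, whose coefficient of $x^{\ell-1}$ is $0$. Note that $l'(b)\neq 0$ for $b\in B$, since $l$ has simple roots, so $L(Q_i)^{-1}$ is well defined.

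\textbf{Step 1: orthogonality.} By bilinearity, it suffices to check monomials. Take $f=\xx^{\alpha}$ with $|\alpha|\le \nu$ and $g=\xx^{\beta}$ with $|\beta|\le m(\ell-1)-\nu-1$. The inner product of their codewords factors over the Cartesian product:
\[\sum_{Q\in B^m}\frac{f(Q)g(Q)}{L(Q)}=\prod_{j=1}^m\sum_{b\in B}\frac{b^{\alpha_j+\beta_j}}{l'(b)}.\]
Since $\sum_j(\alpha_j+\beta_j)\le m(\ell-1)-1$, some index $j$ has $\alpha_j+\beta_j\le \ell-2$. The corresponding factor vanishes by the Lagrange identity, so the whole product is $0$. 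Hence $C'\subseteq C^\perp$.

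\textbf{Step 2: dimensions.} By the same argument as Lemma~\ref{l:footprint_size_n} and Theorem~\ref{t:grobner_jb}, the vanishing ideal of $B^m$ is generated by $l(x_1),\dots,l(x_m)$. These form a universal Gröbner basis with footprint $\{\xx^\alpha : \alpha_j\le \ell-1\}$, a box with $\ell^m$ elements. So $\dim\Car(B^m,d)$ is the number $N(d)$ of lattice points $\alpha$ in the box $[0,\ell-1]^m$ with $|\alpha|\le d$. Multiplying coordinatewise by the nonzero scalars $L(Q)^{-1}$ is an isomorphism, so $\dim C'=N(m(\ell-1)-\nu-1)$.

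The involution $\alpha\mapsto(\ell-1,\dots,\ell-1)-\alpha$ of the box sends $\{|\alpha|\le m(\ell-1)-\nu-1\}$ bijectively onto $\{|\alpha|\ge \nu+1\}$. Therefore
\[N(m(\ell-1)-\nu-1)=\ell^m-N(\nu)=\ell^m-\dim C=\dim C^\perp.\]
Combined with $C'\subseteq C^\perp$, this gives equality.

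The main obstacle is justifying the Lagrange identity cleanly. The alternative route is partial fractions: the expansion $h(x)/l(x)=\sum_{b\in B} \frac{h(b)}{l'(b)(x-b)}$ has the coefficient of $x^{-1}$ at infinity equal to $0$ when $\deg h\le\ell-2$. Edge cases also need care. When $\nu\ge m(\ell-1)$, the degree bound is negative and we take $C'=0$. In that case $C=\F_q^{\ell^m}$, since the footprint monomials already have degree at most $m(\ell-1)$, so $C^\perp=0$ as required.
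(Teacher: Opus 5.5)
Your proof is correct, and it takes a genuinely different route from the paper.

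\textbf{The paper's route.} The paper does not argue the duality directly. It notes that $F_Q=\prod_{k=1}^m l(x_k)/(x_k-b_{j_k})$ vanishes on $B^m\setminus\{Q\}$ with $F_Q(Q)=L(Q)$. Hence $F_Q/L(Q)$ is the standard indicator function of $Q$, of degree $m(\ell-1)$. The paper then defers to known duality theorems for affine Cartesian codes and for evaluation codes described via indicator functions (L\'opez--Manganiello--Matthews, Theorem 2.3; L\'opez--Soprunov--Villarreal, Theorem 5.4).

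\textbf{Your route.} You prove it from scratch in two steps.
\begin{itemize}
\item You use the one-variable identity $\sum_{b\in B} b^k/l'(b)=0$ for $k\le \ell-2$ and tensorize it over the product. Combined with the pigeonhole observation that $|\alpha|+|\beta|\le m(\ell-1)-1$ forces some $\alpha_j+\beta_j\le \ell-2$, this gives $C'\subseteq C^\perp$.
\item You use the involution $\alpha\mapsto(\ell-1,\dots,\ell-1)-\alpha$ of the box, which is the symmetry of the Hilbert function of the complete intersection $\langle l(x_1),\dots,l(x_m)\rangle$. This gives $\dim C'=\dim C^\perp$.
\end{itemize}

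\textbf{What each buys.} Your version is a self-contained, characteristic-free argument. It makes visible exactly why the weights are $1/L(Q)$ and why the degree threshold is $m(\ell-1)-\nu-1$. The paper's citation buys brevity and places the lemma in a framework that also applies to point sets without product structure. Your factorization step, by contrast, relies essentially on $B^m$ being a product.

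\textbf{One implicit step.} You leave $\dim\Car(B^m,d)=N(d)$ unjustified. It needs that reducing a polynomial of degree at most $d$ modulo $l(x_1),\dots,l(x_m)$ does not increase total degree. This holds because each $l(x_j)-x_j^{\ell}$ has degree less than $\ell$; one sentence saying so would close it.
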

\begin{proof}
Let $Q := (b_{j_1},\ldots,b_{j_m})$ be an element in $B^m$. The polynomial \[F_Q(x_1,\ldots,x_m) := \frac{l(x_1)}{(x_1-b_{j_1})} \cdots \frac{l(x_m)}{(x_m-b_{j_m})}\]
has the property that $F_Q(Q_i) \neq 0$ if $Q = Q_i$ and $F_Q(Q_i) = 0$ otherwise. We can also see that $F_Q(Q) = L(Q)$. Thus, the result follows from~\cite[Theorem 2.3]{LOPEZ201913} or \cite[Theorem 5.4]{Lopez2021-tu} by noticing that $\frac{F_Q(x_1,\ldots,x_m)}{L(Q)}$ is the standard indicator function of $Q$ in $B^m$. 
\end{proof}
Let $C \subseteq \F_q^n$ be a code. If we denote by $c \star c^\prime$ the component-wise product between two elements $c$ and $c^\prime$ of $\mathbb{F}_q^n$, and by $c^\prime \star C$ the code $\{ c \star c^\prime : c^\prime \in C \}$, we obtain from Lemma~\ref{26.05.04}
\begin{equation}\label{26.05.05}
\Car(B^m,\nu)^\perp = \lambda_{B^m} \star \Car(B^m, \nu^\perp),
\end{equation}
where $\nu^\perp := m(\ell-1) - \nu - 1$ and $\lambda_{B^m} := (L(Q_1)^{-1}, \ldots, L(Q_{\ell^m})^{-1})$.

The following is a classical result in coding theory that states that the dual of the shortening is the puncturing of the dual.
\begin{lemma}[\hspace{0.1pt}{\cite[Theorem 1.5.7]{Huffman_Pless_2003}}] \label{26.05.03} 
For a set $S \subset [n]$, we have
\[(C^\perp)_S = (C^S)^\perp \qquad \text{ and } \qquad (C^\perp)^S = (C_S)^\perp.\]
\end{lemma}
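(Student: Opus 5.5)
This is the standard duality between shortening and puncturing, and I would prove it by direct double inclusion together with a dimension count. The second identity will then follow from the first by applying it to $C^\perp$ and dualizing.

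For $(C^\perp)_S \subseteq (C^S)^\perp$:
\begin{itemize}
\item Take $w \in (C^\perp)_S$. By definition there is $\tilde w \in C^\perp$ with $\tilde w_i = 0$ for all $i \in S$, and $w$ is obtained from $\tilde w$ by deleting the coordinates in $S$.
\item Take any $c' \in C^S$, and let $c \in C$ be a codeword whose puncturing is $c'$.
\item The coordinates in $S$ contribute nothing to $\tilde w \cdot c$, so $w \cdot c' = \tilde w \cdot c = 0$.
\item Hence $w \in (C^S)^\perp$.
\end{itemize}

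For the reverse inclusion I would use dimensions. Consider the puncturing map $\pi \colon C \to C^S$, and let $K = \ker \pi = \{c \in C : \supp(c) \subseteq S\}$.
\begin{itemize}
\item The codomain $C^S$ has length $n - |S|$, so $\dim (C^S)^\perp = n - |S| - \dim C + \dim K$.
\item The shortening $(C^\perp)_S$ is isomorphic to the subspace of $C^\perp$ vanishing on $S$. This subspace is $C^\perp \cap \langle e_i : i \in S\rangle^\perp$, which equals $(C + \langle e_i : i\in S\rangle)^\perp$.
\item The sum $C + \langle e_i : i\in S\rangle$ has dimension $\dim C + |S| - \dim(C \cap \langle e_i : i \in S\rangle)$.
\item Since $C \cap \langle e_i : i \in S\rangle = K$, the subspace, and hence $(C^\perp)_S$, has dimension $n - \dim C - |S| + \dim K$.
\end{itemize}
The two dimensions agree, so the inclusion is an equality.

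For the second identity, apply the first to $C^\perp$ in place of $C$. Since $(C^\perp)^\perp = C$, this gives $C_S = ((C^\perp)^S)^\perp$. Taking duals of both sides yields $(C_S)^\perp = (C^\perp)^S$.

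The main care needed is in the dimension count, specifically identifying the kernel of puncturing with $C \cap \langle e_i : i\in S\rangle$ so that the two expressions match. Everything else is bookkeeping. Alternatively, one could argue one coordinate at a time and induct on $|S|$, since successive shortenings and puncturings commute with the single-coordinate statement.
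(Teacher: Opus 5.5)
Your proof is correct. The paper gives no argument of its own; it only cites Huffman--Pless, Theorem 1.5.7. The proof there matches yours in two places: the inclusion $(C^\perp)_S\subseteq (C^S)^\perp$, and the derivation of the second identity by replacing $C$ with $C^\perp$ and dualizing.

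The one difference is the reverse inclusion, which Huffman--Pless obtain directly rather than by a dimension count. Given $u\in (C^S)^\perp$, let $\hat u\in\F_q^n$ be $u$ with zeros inserted in the positions of $S$. For every $c\in C$, with $c^*$ its puncturing on $S$, one has $c\cdot\hat u=c^*\cdot u=0$. So $\hat u\in C^\perp$, it vanishes on $S$, and hence $u\in (C^\perp)_S$.

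This zero-padding argument is shorter. It needs neither the kernel $K$ of puncturing nor the identity $\dim V^\perp=n-\dim V$. Your dimension count is longer but gives a by-product: the explicit formula $\dim (C^\perp)_S=\dim (C^S)^\perp=n-|S|-\dim C+\dim K$. From it the remaining parts of Huffman--Pless 1.5.7 follow at once; for example, $K=0$ whenever $|S|$ is smaller than the minimum distance of $C$.

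One small point. The paper defines $C_S$ and $C^S$ as successive single-coordinate operations. You work with the one-shot descriptions: keep the codewords vanishing on $S$, then delete all of $S$ at once. The two agree by an immediate induction on $|S|$, so nothing is lost, but it is worth stating.
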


By definition, we have that $B(m) \subset B^m$. Therefore, we obtain that a CAP code is the puncturing of a Cartesian code. Specifically, we have
\begin{equation}\label{26.05.02}
\CAP(B,m,\nu) = \Car(B^m,\nu)^{B^c(m)}.
\end{equation}

Combining Equation~\eqref{26.05.05}, Equation~\eqref{26.05.02}, and Lemma~\ref{26.05.03}, we obtain
\begin{align*}
\CAP(B,m,\nu)^\perp
&= \left(\Car(B^m,\nu)^{B^c(m)}\right)^\perp\\
&= \left(\Car(B^m,\nu)^\perp\right)_{B^c(m)}\\
&= \left(\lambda_{B^m} \star \Car(B^m, \nu^\perp)\right)_{B^c(m)}\\
&= \lambda_{B(m)} \star \Car(B^m, \nu^\perp)_{B^c(m)},
\end{align*}
where $\lambda_{B(m)}$ is the restriction of $\lambda_{B^m}$ to the entries indexed by the elements in $B(m)$.

We now have the tools to describe the dual of a CAP code.
\begin{theorem}\label{t:dual}
Let $G=\{g_1,\dots,g_t\}$ be the universal Gr\"obner basis of the ideal $\operatorname{I}(B^c(m))$ described in Proposition \ref{p:gb_dual}. Define the set 
$$
\Gamma := \set{M g_i \ : \ M \in \cM, \ \deg (M g_i) \leq \nu^\perp, \ \ini(M g_i)\not \in \langle \ini(g_1),\dots,\ini(g_{i-1}), \ x_1^\ell,\dots,x_m^\ell \rangle }. 
$$
The dual of the CAP code $\CAP(B,m,\nu)$ is given by
\begin{equation*}
\CAP^\perp(B,m,\nu) = \lambda_{B(m)} \star
\operatorname{span}_{\F_q} \left\{ f(B(m))  \ : \ f \in \Gamma \right\}.
\end{equation*}
In other words, up to multiplication by $\lambda_{B(m)}$, the evaluation of the polynomials in $\Gamma$ is a basis for the dual code. 
\end{theorem}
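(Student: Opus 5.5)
The plan is to start from the identity derived just before the statement,
\[
\CAP(B,m,\nu)^\perp=\lambda_{B(m)}\star \Car(B^m,\nu^\perp)_{B^c(m)},
\]
which reduces everything to showing that the evaluations on $B(m)$ of the polynomials in $\Gamma$ form a basis of the shortened code $\Car(B^m,\nu^\perp)_{B^c(m)}$.

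First I would describe that shortened code explicitly. By definition of shortening, a vector lies in $\Car(B^m,\nu^\perp)_{B^c(m)}$ exactly when it equals $f(B(m))$ for some $f\in\F_q[\xx]_{\leq \nu^\perp}$ that vanishes on $B^c(m)$. That is, $f\in \operatorname{I}(B^c(m))\cap \F_q[\xx]_{\leq\nu^\perp}$. Each $Mg_i\in\Gamma$ lies in $\operatorname{I}(B^c(m))$, since $g_i$ does, and has degree at most $\nu^\perp$ by construction. Hence every $f\in\Gamma$ gives $f(B(m))$ in the shortened code. It then remains to prove spanning and linear independence.

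For linear independence, I would first show that distinct elements of $\Gamma$ have distinct leading monomials. Suppose $M\ini(g_i)=M'\ini(g_j)$ with $i<j$. Then $\ini(M'g_j)\in\langle \ini(g_i)\rangle$, which the defining condition of $\Gamma$ excludes. If $i=j$, then $M=M'$.

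All these leading monomials also avoid $\langle x_1^\ell,\dots,x_m^\ell\rangle$. Recall that $\operatorname{I}(B^m)=\langle l(x_1),\dots,l(x_m)\rangle$ and that these generators form a Gröbner basis with leading terms $x_j^\ell$. So every leading monomial of an element of $\Gamma$ lies in $\Delta(\operatorname{I}(B^m))$.

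Now take a nontrivial linear combination $h$ of elements of $\Gamma$. Its leading monomial is one of these distinct monomials, so it lies in $\Delta(\operatorname{I}(B^m))$. Hence $h\notin \operatorname{I}(B^m)$ by Remark~\ref{26.03.16}, i.e. $h(B^m)\neq 0$. But $h$ vanishes on $B^c(m)$, so $h(B(m))\neq0$. This proves the evaluations are independent.

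For spanning, I would fix a graded monomial order, which is allowed because $G$ is a universal Gröbner basis (Proposition~\ref{p:gb_dual}). Take $f\in \operatorname{I}(B^c(m))$ with $\deg f\le \nu^\perp$. The reduction runs as follows.

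\begin{enumerate}
\item Reduce $f$ modulo $l(x_1),\dots,l(x_m)$. This does not change $f(B^m)$ and, with a graded order, does not increase the degree. After it, no monomial of $f$ is divisible by any $x_j^\ell$.
\item If $f\neq0$, then $\ini(f)\in\ini(\operatorname{I}(B^c(m)))$, so some $\ini(g_i)$ divides it. Choose the smallest such $i$ and write $\ini(f)=M\ini(g_i)$.
\item Check that $Mg_i\in\Gamma$. Minimality of $i$ gives $\ini(Mg_i)\notin\langle\ini(g_1),\dots,\ini(g_{i-1})\rangle$. The reduced form of $f$ gives $\ini(Mg_i)\notin\langle x_1^\ell,\dots,x_m^\ell\rangle$. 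The graded order gives $\deg(Mg_i)=\deg\ini(f)\le\nu^\perp$.
\item Subtract the appropriate scalar multiple of $Mg_i$ and reduce modulo the $l(x_j)$ again.
\end{enumerate}

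Each subtraction strictly lowers the leading monomial, and the re-reduction only introduces smaller monomials. The result stays in $\operatorname{I}(B^c(m))$, in degree at most $\nu^\perp$, with unchanged evaluation on $B^m$ up to the $\Gamma$-terms subtracted. Termination follows from the well-ordering, and the process ends at $0$. Therefore $f(B(m))$ is a combination of the evaluations of elements of $\Gamma$.

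The main obstacle I expect is exactly this bookkeeping in the spanning step. One must interleave division by $G$ with reduction modulo the field-type equations $l(x_j)$ so that three things hold simultaneously: degrees stay at most $\nu^\perp$, leading monomials strictly decrease, and each divisor used satisfies the condition defining $\Gamma$. The graded order and the choice of smallest index $i$ are what make this work.
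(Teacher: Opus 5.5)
Your proof is correct, but it handles the spanning half differently from the paper. The paper never proves spanning directly. After noting $\Gamma\subset \operatorname{I}(B^c(m))_{\leq\nu^\perp}$, it counts $|\Gamma|$ by identifying $\ini(\Gamma)$ with the exponents $\alpha\in[0,\ell-1]^m$ satisfying $(m-1)(\ell-1)\le \sum_i\alpha_i\le \nu^\perp$. It then shows this number equals $n-\binom{m+\nu}{m}$ using Corollary~\ref{c:dimension}, and concludes from linear independence that the evaluations form a basis. The independence step is the same as yours, via $\operatorname{I}(B^m)=\langle l(x_1),\dots,l(x_m)\rangle$.

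Your interleaved division replaces that counting step, and this buys three things:
\begin{itemize}
\item The argument is self-contained: it needs neither the dimension of the CAP code nor the lattice-point count, and $|\Gamma|=\dim\CAP^\perp(B,m,\nu)$ becomes a corollary.
\item It is constructive, giving an explicit expansion of any dual codeword in the proposed basis.
\item You prove explicitly that elements of $\Gamma$ have distinct leading monomials. The paper's independence argument uses this only tacitly (``$|\ini(\Gamma)|=|\Gamma|$'').
\end{itemize}
The paper's route, in turn, gives the explicit description of $\ini(\Gamma)$ as a slab of the cube (Figure~\ref{fig:cube}).

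Two minor points. First, the graded order is not actually needed. Each $g_i$ is a product of monic univariate factors, so $\deg(Mg_i)=\deg\ini(Mg_i)$ for every monomial order, and reduction modulo the $l(x_j)$ never raises total degree. Second, $h\notin\operatorname{I}(B^m)$ follows from $\{l(x_1),\dots,l(x_m)\}$ being a Gr\"obner basis, i.e.\ from the definition of the footprint, rather than from Remark~\ref{26.03.16}.
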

\begin{proof}
By the discussion after Lemma~\ref{26.05.03}, we have that
\begin{align*}
\CAP(B,m,\nu)^\perp
&= \lambda_{B(m)} \star \Car(B^m, \nu^\perp)_{B^c(m)}\\
&= \lambda_{B(m)} \star \operatorname{span}_{\F_q} \left\{f(B(m)) \ : \ f \in \fq[\xx]_{\leq \nu^\perp}, \ f(B^c(m))= \mathbf{0} \right\}.
\end{align*}
We have that $f(B^c(m)) = \mathbf{0}$ and $\deg(f) \leq \nu^\perp $ if and only if $f\in \operatorname{I}(B^c(m))_{\leq \nu^\perp}$. By construction, $\Gamma\subset \operatorname{I}(B^c(m))_{\leq \nu^\perp}$. 

By Proposition~\ref{p:gb_dual}, we know
$$
\ini(\operatorname{I}(B^c(m)))=\langle x_1^{\ell-1 -j_1}\cdots x_m^{\ell-1-j_m},\; j_1+\cdots+j_m=\ell-1 \rangle+\langle x_1^{\ell},\dots,x_m^{\ell} \rangle.
$$
Now extend the map $\varphi$ from Equation~\eqref{eq:varphi} to $[0,\ell-1]^m$ instead of $\simplex(m,\ell)$. Thus,
\begin{eqnarray}
&&\varphi(\set{\alpha \in \mathbb{N}^m \ : \ \xx^\alpha \in \ini(\operatorname{I}(B^c(m)))}\cap B^m \nonumber \\
&&=\varphi\left(\set{\alpha \in \mathbb{N}^m \ : \ \ell-1-j_i\leq \alpha_i \leq \ell-1 
\text{ for some } j_1,\dots,j_m \text{ with } \sum_{i=1}^m j_i=\ell-1} \right). \label{26.05.06}
\end{eqnarray}
As $\alpha_i + j_i \geq \ell -1$ for $i \in [m]$ and $\sum_{i=1}^m j_i=\ell-1$, then $\sum_{i=1}^m \alpha_i \geq (m-1)(\ell -1)$. Similarly, if $\sum_{i=1}^m \alpha_i \geq (m-1)(\ell -1)$ and $\alpha_i \leq \ell - 1$, then $\alpha_i\geq  \ell-1-j_i$ for some $j_1,\dots,j_m$ with $\sum_{i=1}^m j_i=\ell-1$. Thus, from Equation~\eqref{26.05.06}, we obtain 
$$\varphi( \set{ \alpha \in \mathbb{N}^m \ : \ \xx^\alpha \in \ini(\operatorname{I}(B^c(m)))}\cap B^m  = B^m \setminus B\left( m, (m-1)(\ell-1) \right),$$
where we are extending the definition of $B(m,s)$ from Definition \ref{26.03.12} to the case $s>\ell$ (note that this is no longer a simplex). Therefore, by the definitions of $G$ and $\Gamma$, we have the following (see Figure~\ref{fig:cube} for a visual representation of these decompositions):
\begin{align*}
&\abs{ \varphi(\set{ \alpha \in \mathbb{N}^m \ : \ \xx^\alpha \in \ini(\Gamma)})}\\
&=\abs{\varphi\left(\{\alpha \in \mathbb{N}^m \ : \ \xx^\alpha \in \ini(\operatorname{I}(B^c(m)))\}\right)\cap \left[B^m \setminus \varphi(\set{\alpha \ : \ \xx^\alpha \in \fq[\xx]_{> \nu^\perp}})\right] }\\
&=\abs{B^m\setminus \left[ B(m,(m-1)(\ell-1)) \sqcup B^c(m,m(\ell-1)-\nu) \right]}\\
&=\abs{B^m}-\abs{B(m,(m-1)(\ell-1))}-\abs{B^c(m,m(\ell-1)-\nu)},
\end{align*}
where $B^c(m,m(\ell-1)-\nu)=B^m\setminus B(m,m(\ell-1)-\nu)$. Note that $\abs{\ini(\Gamma)}=\abs{\Gamma}$. We also have
\begin{align*}
&\abs{B^m}-\abs{B(m,(m-1)(\ell-1))}= \abs{B(m)} = n \text{ and }\\
&\abs{B^c(m,m(\ell-1)-\nu)}=\abs{B^m}-\abs{B(m,m(\ell-1)-\nu)}=\abs{B(m,\nu+1)}=\dim \CAP(B,m,\nu),
\end{align*}
where we have used Corollary \ref{c:dimension}. In other words, $\abs{\Gamma}=n-\dim \CAP(B,m,\nu)=\dim \CAP^\perp(B,m,\nu)$.

Furthermore, the set $\set{f(B(m)) \ : \ f \in \Gamma }$ is linearly independent. Indeed, if there is a linear combination $f^* := \sum_{f\in \Gamma} \lambda_ff$ that vanishes at all the points of $B(m)$, then $f^*$ also vanishes at all the points of $B^c(m)$ since $\Gamma \subset I(B^c(m,\ell))$. Therefore, 
$$
f^* \in \left \langle \prod_{i=0}^{\ell-1}(x_1-b_i),\dots, \prod_{i=0}^{\ell-1}(x_m-b_i)\right\rangle,
$$
which is not possible because it would would imply that $\ini(f^*)\in \langle x_1^{\ell}, \dots, x_m^{\ell} \rangle$. 
\end{proof}

\begin{figure}[H]
    \centering
    \tdplotsetmaincoords{65}{115} 

\begin{tikzpicture}[tdplot_main_coords, scale=1.0, line join=round, line cap=round]
    
    \def\L{5}      
    \def\nval{1.5} 
    
    \draw[thick,->] (0,0,0) -- (\L+\nval,0,0) node[anchor=north east]{$x_1$};
    \draw[thick,->] (0,0,0) -- (0,\L+\nval,0) node[anchor=north west]{$x_2$};
    \draw[thick,->] (0,0,0) -- (0,0,\L+\nval) node[anchor=south]{$x_3$};
    
    \draw[dashed, gray] (0,0,0) -- (\L,0,0);
    \draw[dashed, gray] (0,0,0) -- (0,\L,0);
    \draw[dashed, gray] (0,0,0) -- (0,0,\L);
    
    \filldraw[fill=blue!30, opacity=0.4, draw=blue, thick] 
        (\L,0,0) -- (0,\L,0) -- (0,0,\L) -- cycle;
        
    \filldraw[fill=red!40, opacity=0.6, draw=red!80!black, thick] 
        (\nval,0,0) -- (0,\nval,0) -- (0,0,\nval) -- cycle;

    \draw[thick, gray!80!black] (\L,0,0) -- (\L,\L,0) -- (0,\L,0);
    \draw[thick, gray!80!black] (\L,0,0) -- (\L,0,\L);
    \draw[thick, gray!80!black] (0,\L,0) -- (0,\L,\L);
    \draw[thick, gray!80!black] (0,0,\L) -- (\L,0,\L) -- (\L,\L,\L) -- (0,\L,\L) -- cycle;
    \draw[thick, gray!80!black] (\L,\L,0) -- (\L,\L,\L);

    \filldraw[fill=blue!30, opacity=0.4, draw=blue, thick] 
        (\L,\L,0) -- (\L,0,\L) -- (0,\L,\L) -- cycle;

    \filldraw[fill=red!40, opacity=0.6, draw=red!80!black, thick] 
        (\L,\L,\L-\nval) -- (\L,\L-\nval,\L) -- (\L-\nval,\L,\L) -- cycle;

    \node[anchor=north east, font=\small, xshift=0mm, yshift=2mm] at (\nval,0,0) {$\nu$};
    \node[anchor=north east, font=\small, xshift=0mm, yshift=3mm] at (\L,0,0) {$\ell-1$};
    
    \node[anchor=north, font=\small, xshift=2mm, yshift=3mm] at (0,\nval,0) {$\nu$};
    \node[anchor=north west, font=\small, xshift=0mm, yshift=4mm] at (0,\L,0) {$\ell-1$};
    
    \node[anchor=east, font=\small] at (0,0,\nval) {$\nu$};
    \node[anchor=east, font=\small] at (0,0,\L) {$\ell-1$};
    
    \fill[black] (\L,\L,\L) circle (1.5pt);
    \node[anchor=west, font=\small] at (\L,\L,\L) {$(\ell-1, \ell-1, \ell-1)$};

\end{tikzpicture} 
    \caption{Regions considered in the proof of Theorem \ref{t:dual}.}
    \label{fig:cube}
\end{figure}

\section{Permutation group}\label{s:permutation}
We continue with the same notation from previous sections. In particular, $B = \left\{b_0, b_1, \dots, b_{\ell-1} \right\}$ is an ordered subset of $\F_q$, $B(m)$ denotes the combinatorial simplex, and $\CAP(B,m,\nu)$ is the CAP code of degree $\nu$. In this section, we focus on the permutation group of $\CAP(B,m,\nu)$. We prove that when $\nu < \ell/2$, its permutation group is given only for the affine transformations that leave invariant the combinatorial simplex $B(m)$. In addition, we describe such affine transformations.

Any permutation $\pi \in S_n$ defines the map
\[\begin{array}{lccc}
& \F_{q}^n & \to & \F_{q}^n \\
& {a}=(a_1,\ldots, a_n) & \mapsto & \pi({a}):=\left(a_{\pi(1)},\ldots, a_{\pi(n)}\right),
\end{array}\]
which is a permutation of the entries of ${a}$.

\begin{definition}
Let $C \subseteq \F_q^n$ be a linear code. For an element $\pi$ of the symmetric group $S_n$, we define
\[\pi(C) := \{\pi(c) : c \in C \}.\]
The {\bf permutation group} of $C$ is the subgroup of the symmetric group $S_n$ defined by
\begin{equation*}
\Per(C) := \left\lbrace\pi\in S_n: \pi(C) = C \right\rbrace.
\end{equation*}
\end{definition}
The permutation group tells us which coordinates of every element $c\in C$ we can permute and still get an element of the code $C$. If $G$ is the generator matrix of a code $C$, the permutation group asks for the columns we can permute in $G$ and still get a generator matrix of the code $C$.

Assume that $B(m)=\left\{P_1,\ldots,P_n\right\}$. For every permutation $\pi$ in $\Per(\CAP(B,m,\nu))$ and every element $f(B(m)) = \left(f(P_1),\ldots,f(P_n)\right) $ in $\CAP(B,m,\nu)$, we have that 
\[\left(\pi \circ f\right)(B(m)) := \pi\left(f(P_1),\ldots,f(P_n)\right) = \left(f(P_{\pi(1)},\ldots,f(P_{\pi(n)}\right) \in \CAP(B,m,\nu).\]
Thus, we see that the element $\pi \in \Per(\CAP(B,m,\nu))$ defines a permutation of $B(m)$, which is a function $T_{\pi} \colon \F_q^m \to \F_q^m$ such that
\[T_{\pi} (P_i) = P_{\pi(i)} \quad \text{ for } i=1,\ldots, n.\]
As $T_{\pi} \colon \F_q^m \to \F_q^m$, and $P_1,\ldots,P_n$ are $n$ different points in $\F_q^m$, there are $m$ polynomials $T_1, \ldots, T_m \colon \F_q^m \to \F_q$ such that
\[T_{\pi}({\bm x}) = \left(T_1({\bm x}),\ldots,T_m({\bm x})\right).\]
\begin{remark}\label{26.05.11}
By Theorem~\ref{t:vanishing_jb}, we know that $\operatorname{I}(B(m)) = J_{B(m)}$. Using Theorem~\ref{t:grobner_jb}, there is an element $T_i^\prime({\bm x})$ for $i=1,\ldots,m$ such that $T_i(B(m)) = T_i^\prime(B(m))$ and $\deg(T_i^\prime) < \ell$. Thus, from now on, we assume that $\deg(T_i) < \ell$. In other words, we assume that every $T_i({\bm x})$ is an $\F_q$-combination of monomials of $\Delta(J_{B(m)}) = \{x_1^{a_1}\cdots x_n^{a_n}  \ : \ a_1+\dots+a_n < \ell\}$.
\end{remark}
\begin{remark}\label{26.05.09}
If we define $T_\pi(B(m))$ as the ordered set $\left\{T_\pi(P_1), \ldots, T_\pi(P_n) \right\}$, then we have that
\[ \left(\pi \circ f\right)(B(m)) = \left(f \circ T_\pi\right)(B(m)).\]
\end{remark}

\begin{definition}
We say that a permutation $\pi$ in $\Per(\CAP(B,m,\nu))$ is an {\bf affine permutation} if $T_{\pi}$ is an affine transformation. In other words, there is an $m \times m$ matrix $A_\pi$ with entries in $\F_q$ and an element $b_\pi \in \F_q^m$ such that \[T_{\pi}({\bm x}) = A_{\pi}{\bm x} +b_\pi.\]
\end{definition}
\begin{lemma}\label{26.05.12}
A permutation $\pi$ in $\Per(\CAP(B,m,\nu))$ is an affine permutation if and only if $\deg(T_i) \leq 1$ for every $T_i$ in $T_{\pi}({\bm x}) = \left(T_1({\bm x}),\ldots,T_m({\bm x})\right).$
\end{lemma}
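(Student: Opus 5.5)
The plan is to prove both directions by tying "affine" to the reduced representatives of the coordinate functions fixed in Remark~\ref{26.05.11}. There, each $T_i$ is taken to be the unique $\F_q$-combination of monomials in $\Delta(J_{B(m)})$ that agrees with the $i$-th coordinate of $T_\pi$ on $B(m)$. Uniqueness comes from Remark~\ref{26.03.16} together with Theorem~\ref{t:grobner_jb}: the classes of the footprint monomials form a basis of $\F_q[\bm x]/J_{B(m)}$, so two combinations of footprint monomials that agree on $B(m)$ must coincide as polynomials.

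The "if" direction is immediate. Suppose $\deg(T_i)\le 1$ for every $i$, and write $T_i(\bm x)=\sum_j a_{ij}x_j + c_i$. Set $A_\pi=(a_{ij})$ and $b_\pi=(c_1,\dots,c_m)$. Then $T_\pi(\bm x)=A_\pi\bm x+b_\pi$, which is exactly the definition of an affine permutation.

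For the "only if" direction, assume $T_\pi(\bm x)=A_\pi\bm x+b_\pi$ agrees with $T_\pi$ on $B(m)$. Then the $i$-th coordinate $A_i(\bm x)=\sum_j (A_\pi)_{ij}x_j+(b_\pi)_i$ satisfies $A_i(B(m))=T_i(B(m))$. Hence $A_i-T_i\in \operatorname{I}(B(m))=J_{B(m)}$ by Theorem~\ref{t:vanishing_jb}.

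The key step is to check that $A_i$ is itself a combination of footprint monomials. This holds when $\ell\ge 2$, because every monomial of degree at most $1$ has total degree less than $\ell$ and so lies in $\Delta(J_{B(m)})$ by Theorem~\ref{t:grobner_jb}. Then $A_i-T_i$ is a combination of footprint monomials lying in $J_{B(m)}$. Its leading monomial would otherwise be a leading monomial of a nonzero element of the ideal, which is impossible for a footprint monomial, so $A_i-T_i=0$. Therefore $\deg(T_i)=\deg(A_i)\le 1$.

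The main obstacle is the degenerate case $\ell=1$, which is also the only real subtlety. There $B(m)$ is a single point, $n=1$, and the reduced $T_i$ are constants, so the statement holds trivially. Since the evaluation map assumes $n\ge 2$, this case is excluded anyway.
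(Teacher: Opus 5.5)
Your proof is correct and follows the same direct unpacking of the definition of an affine permutation that the paper uses. The one thing you add is the uniqueness argument: the footprint of $J_{B(m)}$ contains all monomials of degree at most $1$ when $\ell\ge 2$, so the reduced representative $T_i$ from Remark~\ref{26.05.11} must coincide with the affine coordinate $A_i\cdot\bm x+b_i$; the paper takes this identification for granted.
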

\begin{proof}
Assume that $T_{\pi}({\bm x}) = A_{\pi}{\bm x} +b_\pi$ is affine. Denote $i$-th row of $A_{\pi}$ by $A_i$. Then, $T_i({\bm x}) = A_i \cdot {\bm x} + b_i$, where $A_i \cdot x$ denotes the standard inner product. Thus, $\deg(T_i) \leq 1$. The converse is true by noticing that if $\deg(T_i) \leq 1$, there is an element $A_i \in \F_q^m$ and $b_i \in \F_q$ such that $T_i({\bm x}) = A_i \cdot {\bm x} + b_i$.
\end{proof}
The group of affine transformations that permute the points of the combinatorial simplex $B(m)$ is denoted by $\operatorname{GA}(B, m)$. 

From Remark~\ref{26.05.11}, we obtain that every element $\pi$ in $\Per(\CAP(B,m,\nu))$ defines a unique element $T_{\pi}({\bm x})~\colon~\F_q^m~\to~\F_q^m$ that permutes the points of $B(m)$. The converse is not necessarily true. However, if the function $T_{\pi}({\bm x})$ is an affine transformation, i.e., if $T_{\pi}({\bm x}) \in \operatorname{GA}(B, m)$, then it is true, as the following result shows.
\begin{proposition}
For every $T({\bm x})$ in $\operatorname{GA}(B, m)$, there is $\pi$ in $\Per(\CAP(B,m,\nu))$ such that $T_{\pi}({\bm x}) = T({\bm x})$.
\end{proposition}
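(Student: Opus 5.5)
Given $T(\bm x)=A\bm x+b$ in $\operatorname{GA}(B,m)$, I will define $\pi$ directly from the action of $T$ on the points and then check that $\pi$ preserves the code. Since $T$ permutes the points of $B(m)=\{P_1,\dots,P_n\}$, for each $i\in[n]$ there is a unique index $\pi(i)\in[n]$ with $T(P_i)=P_{\pi(i)}$. Because $T$ restricted to $B(m)$ is a bijection onto $B(m)$, the map $i\mapsto\pi(i)$ is injective on $[n]$, hence a permutation, so $\pi\in S_n$. By construction $T_\pi(P_i)=P_{\pi(i)}=T(P_i)$ for all $i$. By Remark~\ref{26.05.11}, $T_\pi$ is determined only through its values on $B(m)$, up to reduction modulo $J_{B(m)}$. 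The coordinates of $T$ are affine, so they have degree $\leq 1<\ell$ whenever $\ell\geq 2$, and they are already reduced. Hence we may take $T_\pi=T$.

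It remains to show $\pi\in\Per(\CAP(B,m,\nu))$. Take any codeword $f(B(m))$ with $f\in\F_q[\bm x]_{\leq\nu}$. By Remark~\ref{26.05.09},
\[
\pi\bigl(f(B(m))\bigr)=\bigl(f(P_{\pi(1)}),\dots,f(P_{\pi(n)})\bigr)=(f\circ T)(B(m)).
\]
Since each coordinate $T_j(\bm x)=A_j\cdot\bm x+b_j$ has degree at most one, substituting into any monomial of degree $d$ gives a polynomial of degree at most $d$. Therefore $f\circ T\in\F_q[\bm x]_{\leq\nu}$, and $\pi(f(B(m)))\in\CAP(B,m,\nu)$. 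This shows $\pi(C)\subseteq C$ for $C=\CAP(B,m,\nu)$. Since $\pi$ is a bijection on coordinates, $|\pi(C)|=|C|$, so equality holds and $\pi\in\Per(C)$.

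The only delicate point is the identification $T_\pi=T$. The paper defines $T_\pi$ as some polynomial map agreeing with $P_i\mapsto P_{\pi(i)}$, normalized by Remark~\ref{26.05.11} to have coordinates of degree $<\ell$. I will argue that two polynomial maps with coordinates in the span of $\Delta(J_{B(m)})$ that agree on $B(m)$ are equal. This follows from the linear independence of the evaluations of footprint monomials (Remark~\ref{26.03.16} and Theorem~\ref{t:grobner_jb}), and it gives uniqueness. The degenerate case $\ell=1$ is trivial: then $B(m)$ is a single point, $\pi$ is the identity, and any map agreeing on that point may serve as $T_\pi$ on $B(m)$.
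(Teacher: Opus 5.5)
Your proof is correct and follows the same route as the paper. You define $\pi$ by $T(P_i)=P_{\pi(i)}$, note that $\pi(f(B(m)))=(f\circ T)(B(m))$, and use that substituting affine coordinates does not raise degree, so $f\circ T\in\F_q[\bm x]_{\leq\nu}$. Your extra steps make explicit two points the paper leaves implicit: the cardinality argument upgrading $\pi(C)\subseteq C$ to $\pi(C)=C$, and the footprint-based uniqueness that justifies $T_\pi=T$ under the normalization of Remark~\ref{26.05.11}.
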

\begin{proof}
Let $f({\bm x})$ be an element in $\F_q[{\bm x}]_{\leq \nu}$. As $T({\bm x})$ is an affine transformation, following the the proof of Lemma~\ref{26.05.12}, 
$T({\bm x}) = \left(T_1({\bm x}),\ldots,T_m({\bm x})\right)$ with $\deg(T_i) = 1$. We obtain that
\[
(f \circ T)({\bm x}) = f(T({\bm x})) = f(T_1({\bm x}),\ldots,T_m({\bm x})).
\]
As $\deg(T_i) \leq 1$, then $\deg(f) = \deg(f \circ T)$. Thus, the element 
$(f \circ T)(B(m))$ is also in $\CAP(B,m,\nu)$. By defining $\pi \in S_n$ such that $\pi(i):=j$ if $T(P_i) = P_j$, we obtain from Remark~\ref{26.05.09} that
\begin{align*}
\left(\pi \circ f\right)(B(m))
&= \left(f \circ T\right)(B(m))\\
&= \left(f(T(P_1)),\ldots,f(T(P_n))\right).
\end{align*}
Therefore, $T_{\pi}({\bm x}) = T({\bm x})$, which completes the proof.
\end{proof}

\begin{remark}\label{26.05.13}
As every element $T({\bm x}) \in \operatorname{GA}(B, m)$ defines a permutation $\pi \in \Per(\CAP(B,m,\nu))$, we write
\[\operatorname{GA}(B, m) \subseteq \Per(\CAP(B,m,\nu)).\]
\end{remark}
The following theorem shows that when $\nu < \ell/2$, the permutation group is affine.
\begin{theorem}\label{26.05.17}
If $0 < \nu < \ell/2$, then
\[
\Per(\CAP(B,m,\nu)) = \operatorname{GA}(B, m).
\]
In other words, the permutation group is given by the affine transformations that permute the elements of $B(m)$.
\end{theorem}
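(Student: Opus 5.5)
By Remark~\ref{26.05.13} we already have $\operatorname{GA}(B,m)\subseteq \Per(\CAP(B,m,\nu))$, so only the reverse inclusion needs proof. By Lemma~\ref{26.05.12}, this amounts to showing that for every $\pi\in\Per(\CAP(B,m,\nu))$ the reduced coordinate polynomials $T_1,\dots,T_m$ of $T_\pi$ from Remark~\ref{26.05.11} satisfy $\deg(T_i)\le 1$. The plan is to combine the degree bound $\deg(T_i)<\ell$ with the minimum distance formula (Corollary~\ref{min_distance}) and the footprint description (Theorem~\ref{t:footprint_sharp}), which together say that the low-degree functions on $B(m)$ are rigid.

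The key steps are as follows.

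\emph{Step 1.} Fix $i$ and apply Remark~\ref{26.05.09} to the coordinate function $f=x_i$, which has degree $1\le\nu$. Since $\pi$ preserves the code, $(x_i\circ T_\pi)(B(m))=T_i(B(m))$ is a codeword. Hence there is $g_i\in\F_q[\xx]_{\le\nu}$ with $g_i(B(m))=T_i(B(m))$.

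\emph{Step 2.} Every function on $B(m)$ has a unique representative spanned by the footprint monomials $\Delta(J_{B(m)})$, i.e.\ of degree $<\ell$ (Theorem~\ref{t:grobner_jb} and Remark~\ref{26.03.16}). Since $\nu<\ell$, this gives $T_i=g_i$ as reduced polynomials, so $\deg T_i\le\nu$.

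\emph{Step 3.} Show $\deg T_i\le 1$. Suppose $d_i:=\deg T_i\ge 2$ for some $i$. The image $T_i(B(m))$ is, as a multiset, a permutation of $x_i(B(m))$, because $T_\pi$ is a bijection of $B(m)$. Consequently, for each $b_j$ the polynomial $T_i-b_j$ vanishes exactly where $x_i\circ T_\pi=b_j$. This set has size equal to the number of points of $B(m)$ with $i$-th coordinate $b_j$, namely $|B(m-1,\ell-j)|=\binom{m-1+\ell-j-1}{m-1}$, the size of a simplex one dimension down. Take $j=0$. Then $T_i-b_0$ has degree $d_i\ge 2$, and its weight is
\[
n-\binom{m+\ell-2}{m-1}=\binom{m+\ell-2}{m},
\]
which is the weight of a degree-one polynomial. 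To rule this out, I would use the finer footprint count from the proof of Theorem~\ref{t:footprint_sharp}. A polynomial with leading monomial $\xx^\alpha$ has at least $|\Delta(J_{B(m)})\setminus\Delta(\xx^\alpha)|=\binom{\ell-|\alpha|+m-1}{m}$ nonzeros. Combined with an upper bound, this forces a contradiction when $|\alpha|\ge2$, provided we work with a suitable codeword.

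A cleaner route than the raw weight is to exploit that $\pi$ preserves \emph{every} code $\CAP(B,m,\nu')$ with $\nu'\le\nu$. I would first show that $\pi$ preserves the nested filtration by degree, using the hypothesis $2\nu<\ell$. Take $f=x_i^2$ (allowed when $\nu\ge2$; the case $\nu=1$ needs a separate argument). Then $T_i^2$ has degree $2d_i$, and if $2d_i\le 2\nu<\ell$ it is reduced. So $T_i^2$ must also have degree $\le\nu$, giving $d_i\le\nu/2$. Iterating with $x_i^k$, where $kd_i\le k\nu$, the same reasoning shows $kd_i\le\nu$ as long as $kd_i<\ell$. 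Choosing $k$ with $\nu<kd_i<\ell$ is possible when $d_i\ge2$ and $d_i\le\nu<\ell/2$: let $k$ be minimal with $kd_i>\nu$. Then $kd_i\le\nu+d_i\le2\nu<\ell$, which is a contradiction. But $k$ must satisfy $k\le\nu$ for $x_i^k$ to lie in the code; this holds since $k\le\nu/d_i+1\le\nu$ when $d_i\ge2$ and $\nu\ge2$. The case $\nu=1$ is immediate, because $\deg T_i\le\nu=1$ already from Step 2. This gives $\deg T_i\le 1$ for all $i$, and Lemma~\ref{26.05.12} concludes.

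The main obstacle is the degree bookkeeping in Step 3. One must verify that products such as $T_i^k$ really remain reduced, i.e.\ that they are genuine combinations of footprint monomials of degree $<\ell$, so that the uniqueness of reduced representatives applies. This is exactly where $\nu<\ell/2$ is needed.
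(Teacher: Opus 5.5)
Your argument is correct and is essentially the paper's proof. Both compose $x_i^j$ (with $j\le\nu$) with $T_\pi$, use $2\nu<\ell$ to keep $\deg T_i^j<\ell$, and then invoke uniqueness of reduced representatives to force $\deg T_i^j\le\nu$; your minimal-$k$ contradiction is a compressed form of the paper's induction up to $j=\nu$ (which ends with $\nu\deg T_i\le\nu$), and the weight-counting detour and the unproved filtration remark in Step 3 are never used and can be dropped.
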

\begin{proof}
From Remark~\ref{26.05.13}, we just need to prove that $\Per(\CAP(B,m,\nu)) \subseteq \operatorname{GA}(B, m)$. Let $\pi$ be an element in $\Per\left(\CAP(B,m,\nu)\right)$ and $T_{\pi}({\bm x}) = \left(T_1({\bm x}),\ldots,T_m({\bm x})\right)$ the permutation of $B(m)$ defined by $\pi$. For $1 \leq j \leq \nu$, we use induction to prove that $\deg(T_i^j) \leq \nu$.

\begin{itemize}
\item[$\bullet$] $(j=1)$ By Remark~\ref{26.05.09}, we have that
\begin{equation}\label{26.05.10}
\left(\pi \circ f\right)(B(m)) = \left(f \circ T_\pi\right)(B(m)) \in \CAP(B, m, \nu)
\end{equation}
for every element $f({\bm x}) \in \F_q[{\bm x}]_{\leq \nu}$. As $0 < \nu$, the polynomials $f_i({\bm x}) := x_i$ are in $\F_q[{\bm x}]_{\leq \nu}$ for $i = 1,\ldots, m$. By Equation~\eqref{26.05.10}, we get
\begin{align*}
\left(\pi \circ f_i\right)(B(m))
&= \left(f_i \circ T_\pi\right)(B(m))\\
&= T_i(B(m)) \in \CAP(B, m, \nu).
\end{align*}
Thus, for $i = 1,\ldots, m$, there is an element $G_i \in \F_q[{\bm x}]_{\leq \nu}$ such that $T_i(B(m)) = G_i(B(m))$. By Remark~\ref{26.05.11}, we know that $\deg(T_i) < \ell$. By Theorem~\ref{t:vanishing_jb}, the evaluation of a polynomial of degree less than $\ell$ is unique, thus, we get that $T_i({\bm x}) = G_i({\bm x}) \in \F_q[{\bm x}]_{\leq \nu}$ and $\deg(T_i) \leq \nu$.
\item[$\bullet$] $(1 < j \leq \nu)$ Assume that for $j^\prime < j$, $\deg(T_i^{j^\prime}) \leq \nu$. Thus,
\begin{align*}
\deg(T_i^j) &= \deg(T_i) + \deg(T_i^{j-1})\\
&\leq \nu + \nu < \ell/2 + \ell/2 = \ell.
\end{align*}
For $i = 1,\ldots, m$, the polynomials $f_i({\bm x}) := x_i^j $ belong to $\F_q[{\bm x}]_{\leq \nu}$. By Equation~\eqref{26.05.10}, we obtain
\begin{align*}
\left(\pi \circ f_i\right)(B(m))
&= \left(f_i \circ T_\pi\right)(B(m))\\
&= T_i^j(B(m)) \in \CAP(B, m, \nu).
\end{align*}
Then, for $i = 1,\ldots, m$, there is an element $G_i \in \F_q[{\bm x}]_{\leq \nu}$ such that $T_i^j(B(m)) = G_i(B(m))$. As $\deg(T_i^j) < \ell$ and the evaluation of a polynomial of degree less than $\ell$ is unique by Theorem~\ref{t:vanishing_jb}, we get $T_i^j({\bm x}) = G_i({\bm x}) \in \F_q[{\bm x}]_{\leq \nu}$ and $\deg(T_i^j) \leq \nu$.
\end{itemize}
By taking $j=\nu$, we see that $\deg(T_i^\nu) \leq \nu$, which means that $\deg(T_i) \leq 1$ for $i=1,\ldots,m$. By Lemma~\ref{26.05.12}, we obtain the result.
\end{proof}

\subsection{Affine transformations that permute}
We now focus on $\operatorname{GA}(B, m)$, the affine transformations that permute the points of $B(m)$.

Let $H$ be a hyperplane in $\F_q^m$. We say that $H$ is {\bf maximal} over $B(m)$ if $|B(m) \cap H|$ is maximal, i.e., $H$ intersects $B(m)$ on the maximum number of points. 

\begin{lemma}\label{26.05.07}
Assume that $b_0 = 0$ in $B = \{b_0, \dots, b_{\ell - 1}\}$ and let $H$ be a hyperplane over $\F_q^m$. If $H$ is maximal over $B(m)$, then \[|H \cap B(m)| =\binom{\ell + m - 2}{m - 1}.\]
\end{lemma}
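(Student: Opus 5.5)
The plan is to prove the two inequalities separately: an explicit hyperplane attains the value $\binom{\ell+m-2}{m-1}$, and the minimum distance of the CAP code of degree one shows that no hyperplane can do better.

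\emph{Lower bound.} Take the coordinate hyperplane $H_0=\{x_m=0\}$. Since $b_0=0$ and the elements of $B$ are distinct, a point $(b_{i_1},\dots,b_{i_m})\in B(m)$ lies in $H_0$ if and only if $i_m=0$. So $H_0\cap B(m)$ is in bijection with the tuples $(i_1,\dots,i_{m-1})\in\mathbb{N}^{m-1}$ with $i_1+\cdots+i_{m-1}<\ell$. This is a copy of the simplex $B(m-1)$, and by the remark on $|B(m,s)|$ it has $\binom{\ell+m-2}{m-1}$ elements. Hence the maximum of $|H\cap B(m)|$ is at least this number.

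\emph{Upper bound.} Let $H=V(h)$ for some affine linear polynomial $h=a_1x_1+\cdots+a_mx_m+c$ with $(a_1,\dots,a_m)\neq 0$. Then $h\in\F_q[\xx]_{\leq 1}$, so $h(B(m))\in\CAP(B,m,1)$, and
\[
|H\cap B(m)| = |V_{B(m)}(h)| = n-\wH(h(B(m))).
\]
We need $h(B(m))\neq 0$. Since $\deg h=1<\ell$ when $\ell\ge 2$, Theorem~\ref{t:grobner_jb} shows that $h$ is a nonzero combination of footprint monomials, so it does not vanish on all of $B(m)$. Corollary~\ref{min_distance} with $\nu=1$ then gives $\wH(h(B(m)))\geq\binom{\ell+m-2}{m}$. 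Therefore
\[
|H\cap B(m)|\leq \binom{\ell+m-1}{m}-\binom{\ell+m-2}{m}=\binom{\ell+m-2}{m-1},
\]
where the last equality is Pascal's rule.

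The only step needing care is checking the hypotheses of Corollary~\ref{min_distance}, namely $\nu=1<\ell$ and $h(B(m))\neq 0$. The degenerate case $\ell=1$ is handled directly: there $B(m)=\{0\}$, the single point lies on the hyperplane $x_m=0$, and $\binom{m-1}{m-1}=1$. Combining the lower and upper bounds proves the lemma.
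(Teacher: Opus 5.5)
Your proof is correct. The lower bound is the same as the paper's: a coordinate hyperplane cuts out a copy of $B(m-1)$. For the upper bound you take a different route.

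The paper argues directly. If $H:\sum_i a_ix_i=b$ with $a_j\neq 0$, then replacing the $j$-th coordinate of a point of $H\cap B(m)$ by $b_0=0$ is injective, because on $H$ the coordinate $x_j$ is determined by the others. The image lands in $H_j\cap B(m)$, because lowering an index keeps the index sum below $\ell$. Hence $|H\cap B(m)|\le |H_j\cap B(m)|$.

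You instead write $|H\cap B(m)|=n-\wH(h(B(m)))$ for a nonzero codeword of $\CAP(B,m,1)$, and then apply Corollary~\ref{min_distance} with $\nu=1$ and Pascal's rule. Your footprint argument that $h(B(m))\neq 0$ and your separate treatment of $\ell=1$ cover the only hypotheses that need checking. There is no circularity, since that corollary is established in Section~\ref{s:ghws} independently of this lemma.

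Comparison of the two routes:
\begin{itemize}
\item Your version is shorter given the Gr\"obner machinery already in place, and it needs no $b_0=0$ for the upper bound.
\item It shows that the lemma is just $d_1(\CAP(B,m,1))=\binom{\ell+m-2}{m}$ restated in terms of zero sets.
\item What it does not give is the explicit injection $H\cap B(m)\hookrightarrow H_j\cap B(m)$. The paper reuses that injection in Proposition~\ref{hyperlanes lemma}: equality forces it to be a bijection, and this is what identifies the maximal hyperplanes.
\end{itemize}
A pure counting argument like yours yields the maximal value but gives no handle on the equality case.
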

\begin{proof}
As $b_0=0$, each coordinate hyperplane $H_i : x_i = 0$ intersects $B(m)$ in a smaller simplex $B(m-1)$, which has $\binom{\ell + m - 2}{m - 1}$ points.

Suppose that a hyperplane $H: \sum_{i=1}^m a_i x_i = b$ intersects $B(m)$, and call the intersection $S_H$. As $H$ is a hyperplane, there is $j$ such that $a_j \neq 0$. Project $S_H$ onto $H_j$ by dropping the $j$-th coordinate. This projection is injective because, given the other coordinates, $x_j$ is uniquely determined by the equation of $H$. As $b_0 = 0$, the image of the projection lies in $B(m)$; thus
\begin{equation*}
|H \cap B(m)| \leq |H_j \cap B(m)| = \binom{\ell + m - 2}{m - 1},
\end{equation*}
which completes the proof.
\end{proof}

\begin{proposition}\label{hyperlanes lemma}
Assume that $b_0 = 0$ in $B = \{b_0, \dots, b_{\ell - 1}\}$. The only maximal hyperplanes over $B(m)$ are $H_i : x_i = 0$ for $1 \leq i \leq m$ and, possibly, $H_0 : x_1 + \dots + x_m = b_{\ell - 1}$.    
\end{proposition}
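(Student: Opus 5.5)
We need to show that any hyperplane $H$ with $|H\cap B(m)| = \binom{\ell+m-2}{m-1}$ is one of the $H_i$ or $H_0$. The plan is to refine the projection argument from Lemma~\ref{26.05.07}. Write $H:\sum_i a_i x_i = b$ and fix $j$ with $a_j\neq 0$. Projecting $S_H = H\cap B(m)$ onto $H_j$ by dropping the $j$-th coordinate is injective, and its image lies in the copy of $B(m-1)$ inside $H_j$. So $H$ is maximal exactly when this projection is onto $B(m-1)$. That means: for every $(i_1,\dots,\widehat{i_j},\dots,i_m)$ with $\sum_{r\neq j} i_r<\ell$, the value $x_j = a_j^{-1}\bigl(b-\sum_{r\neq j} a_r b_{i_r}\bigr)$ must equal some $b_{i_j}$ with $i_j < \ell - \sum_{r\ne j} i_r$. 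We will derive strong constraints from this surjectivity condition, using it on well-chosen points.

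\textbf{Case 1: some coefficient vanishes.} Suppose $a_k = 0$ for some $k\neq j$. Take a point of $B(m-1)$ whose $k$-coordinate index is $\ell-1$ and all others are $0$. Since $b_0=0$, its preimage must have $x_j = b_0 = 0$ (the only admissible index is $i_j=0$). Varying only the $k$-th coordinate does not change $x_j$, because $a_k=0$. Hence at the origin of $B(m-1)$ we also get $x_j=0$, which forces $b=0$. Next take points with one nonzero index $i_r=\ell-1$ at a coordinate $r\ne j$ with $a_r\neq0$. These force $a_r b_{\ell-1}=0$, so $a_r=0$ whenever $b_{\ell-1}\neq 0$ (and $b_{\ell-1}\neq0$ since $b_0=0$ and the elements are distinct, $\ell\ge2$). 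Hence all $a_r=0$ for $r\ne j$, $b=0$, and $H=H_j$.

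\textbf{Case 2: all $a_r\neq 0$.} Now the argument can be run with any coordinate playing the role of $j$. Taking points of $B(m-1)$ whose only nonzero index is $\ell-1$ in a coordinate $r\ne j$ forces $b - a_r b_{\ell-1}=0$, that is, $a_r b_{\ell-1}=b$. Applying this for every choice of $j$ (when $m\geq 2$; the case $m=1$ is trivial), we get $a_r b_{\ell-1} = b$ for all $r$. Then $b\ne0$, all $a_r$ are equal, and after scaling $H$ is $x_1+\dots+x_m=b_{\ell-1}$, which is $H_0$. The word ``possibly'' in the statement reflects that $H_0$ need not actually be maximal, so no converse is required.

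The main obstacle is bookkeeping the admissible index range for $x_j$ at the boundary points. The whole argument relies on points with $\sum_{r\neq j}i_r=\ell-1$, where only $i_j=0$ (value $0$) is allowed. I would verify carefully that these boundary points alone suffice in both cases, and treat separately the degenerate case $m=1$ and $\ell=1$.
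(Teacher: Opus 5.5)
Your proof is correct and is essentially the paper's argument. Maximality makes the projection onto $H_j$ a bijection, and the preimages of the vertices $b_{\ell-1}e_r$ ($r\neq j$) force $a_r b_{\ell-1}=b$; your split (some $a_k=0$ versus all $a_r\neq 0$) is equivalent to the paper's split ($b=0$ versus $b\neq 0$), and you handle the second case by switching the projection coordinate, exactly as the paper does with $j'$.

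One correction: the case $m=1$ is not trivial but false when $\ell\geq 3$. Every point $x_1=b_i$ is a hyperplane meeting $B(1)$ in $\binom{\ell-1}{0}=1$ point, so all $\ell$ of them are maximal. Like the paper, your argument genuinely needs $m\geq 2$ and $\ell\geq 2$.
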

\begin{proof}
Let $H: \sum_{i=1}^m a_i x_i = b$ be a maximal hyperplane over $B(m)$. By Lemma~\ref{26.05.07},
\[|H \cap B(m)| \leq |H_j \cap B(m)|.\]
Assume that equality holds and that $a_j\neq 0$, for some $1\leq j \leq m$. Since both sets are finite of the same size, the injectivity described in the proof of Lemma~\ref{26.05.07} upgrades to a bijection, so every point of $H_j \cap B(m)$ has a unique preimage in $S_H = H \cap B(m)$. For the $i$-th standard vector $e_i$ in $\F_q^m$, with $i \neq j$, we have that $b_{\ell -1}e_i \in H_j \cap B(m)$. Its preimage in $S_H$ must also be $b_{\ell -1}e_i$, since this is the only lattice point of $B(m)$ whose projection is $b_{\ell - 1}e_i$. Plugging into the equation of $H$, we get
\begin{equation*}
a_ib_{\ell - 1}= b \quad \Longrightarrow \quad a_i = \dfrac{b}{b_{\ell - 1}} \qquad \text{for all } i \neq j.
\end{equation*}
If $b = 0$, then $a_i = 0$ for all $i \neq j$, and the equation of $H$ becomes $a_j x_j = 0$, i.e. $H = H_j$. If $b \neq 0$, we obtain
\[H : \sum_{i \neq j}\frac{b}{b_{\ell -1}}x_i + a_jx_j = b.\]
Equivalently, we can rescale $H$ to obtain
\begin{equation}\label{eq:H-rescaled}
H: \sum_{i \neq j} x_i + \dfrac{a_jb_{\ell - 1}}{b} x_j = b_{\ell - 1}. 
\end{equation}
Since we can assume that $H$ is different from $H_i$ for all $i$, then there must be $j'\neq j$ with $a_{j'}\neq 0$. By the same argument, we obtain
\begin{equation}\label{eq:H-rescaled2}
H: \sum_{i \neq j'} x_i + \dfrac{a_j'b_{\ell - 1}}{b} x_j' = b_{\ell - 1}. 
\end{equation}
As Equations~\eqref{eq:H-rescaled} and \eqref{eq:H-rescaled2} represent the same hyperplane, all the coefficients of $x_i$'s should be $1$, i.e., 
\begin{equation*}
H = H_0 : x_1 + \dots + x_m = b_{\ell - 1}. 
\end{equation*}
In this case, we see that all the diagonal points $\{ (b_{i_1},\dots,b_{i_m}) : i_1+\ldots+i_m = \ell-1 \}$ of $B(m)$ form the hyperplane $H_0$.

Therefore, joining the cases $b=0$ and $b\neq0$, we see that any hyperplane besides $ H_0, H_1, \dots, H_m$ intersects $B(m)$ in strictly fewer than $\binom{\ell + m - 2}{m - 1}$ points.
\end{proof}

The following result shows that the affine transformations that permute the points of $B(m)$ also permute the vertices of $B(m)$.
\begin{lemma}\label{26.05.15}
Assume that $b_0 = 0$ in $B = \{b_0, \dots, b_{\ell-1}\}$ and define the vertices $v_0 := \mathbf{0}$ and $v_i := b_{\ell-1}e_i$ for $i=1,\ldots, m$. If $T({\bm x}) \in \operatorname{GA}(B,m)$, then $T({\bm x})$ permutes the vertices. In other words, $T(v_i) = v_j$ for $i,j \in \{0,\ldots,m\}$.
\end{lemma}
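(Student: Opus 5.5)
The plan is to use the fact that an affine bijection $T$ of $\F_q^m$ sends hyperplanes to hyperplanes and, since $T$ permutes $B(m)$, preserves intersection sizes: $|T(H)\cap B(m)| = |T(H\cap B(m))| = |H\cap B(m)|$. So $T$ maps maximal hyperplanes over $B(m)$ to maximal hyperplanes. First I will check that $T$ is invertible. It is injective on $B(m)$, and $B(m)$ affinely spans $\F_q^m$ because it contains $v_0=\mathbf 0$ and the $v_i=b_{\ell-1}e_i$ with $b_{\ell-1}\neq 0$. Hence the linear part $A$ has full rank on a spanning set, so $A$ is invertible. (This uses $\ell\ge 2$; the case $\ell=1$ is trivial since $B(m)=\{\mathbf 0\}$.)

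By Proposition~\ref{hyperlanes lemma}, $T$ permutes the set $\mathcal H$ of maximal hyperplanes. This set is either $\{H_1,\dots,H_m\}$ or $\{H_0,H_1,\dots,H_m\}$. Next I characterize the vertices through $\mathcal H$. The vertex $v_0$ lies on $H_1,\dots,H_m$. Each $v_i$ with $i\ge 1$ lies on the $m-1$ hyperplanes $H_k$ with $k\neq i$, and also on $H_0$ when $H_0$ is maximal, since $b_{\ell-1}$ is the $i$-th coordinate sum. I then claim that, when $H_0\in\mathcal H$, the vertices are exactly the points of $\F_q^m$ lying on $m$ hyperplanes of $\mathcal H$. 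Any $m$ of the $m+1$ hyperplanes $H_0,\dots,H_m$ have linearly independent normals, so their intersection is a single point. The $m+1$ choices of $m$ hyperplanes give exactly the points $v_0,\dots,v_m$: omitting $H_0$ gives $v_0$, and omitting $H_i$ gives $v_i$. Since $T$ is an affine bijection permuting $\mathcal H$, it permutes these intersection points, so $T$ permutes the vertices.

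The main obstacle is the case where $H_0$ is not maximal, so $\mathcal H=\{H_1,\dots,H_m\}$. Then only $v_0=\bigcap_i H_i$ is directly forced, and $T(v_0)=v_0$. For the other vertices I will use lines, namely intersections of $m-1$ coordinate hyperplanes. The set $L_i=\bigcap_{k\neq i}H_k$ is the $x_i$-axis, and $T$ permutes these axes. Moreover $L_i\cap B(m)=\{b_0e_i,\dots,b_{\ell-1}e_i\}$, so $T$ maps this set onto some $L_j\cap B(m)$. It remains to show that $b_{\ell-1}e_i\mapsto b_{\ell-1}e_j$. Here I will use a counting invariant. For a point $P\in B(m)$, count the points of $B(m)$ on lines through $P$, or more concretely consider the hyperplanes parallel to $H_i$ through points of $L_i$. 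Since $T$ fixes $\mathbf 0$ and is linear, it maps the slice $\{x_i=b_k\}\cap B(m)$ onto a translate of $H_j$ that meets $B(m)$ in the same number of points. The slice $\{x_i=b_k\}$ contains $\binom{\ell-k+m-2}{m-1}$ points, which is strictly decreasing in $k$. Therefore $b_ke_i$ is sent to $b_ke_j$, and in particular $v_i\mapsto v_j$. I expect the delicate point to be justifying that $T$ sends the parallel slice to a parallel slice of the right kind, which follows from linearity together with $T(H_i)=H_j$.
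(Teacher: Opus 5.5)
Your proof is correct. The case where $H_0$ is maximal is essentially the paper's argument, but you handle the case $\mathcal H=\{H_1,\dots,H_m\}$ differently.

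\textbf{The paper's route.} After obtaining $T(v_0)=v_0$ and $T(L_i)=L_{\sigma(i)}$, the paper argues by contradiction using additivity. Suppose $T(b_{j_i}e_i)=b_{\ell-1}e_{\sigma(i)}$ with $j_i<\ell-1$. Then for $i'\neq i$ the point $b_{j_i}e_i+b_{\ell-1-j_i}e_{i'}\in B(m)$ is sent to $b_{\ell-1}e_{\sigma(i)}+b_je_{\sigma(i')}$ with $j\neq 0$, and this point is not in $B(m)$.

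\textbf{Your route.} You use a counting invariant instead. The slice $\{x_i=b_k\}$ goes to a translate of $T(H_i)=H_j$ passing through $T(b_ke_i)\in B(m)$, so its image is some slice $\{x_j=b_{k'}\}$. The count $\binom{\ell-k+m-2}{m-1}$ is strictly decreasing in $k$, which forces $k'=k$.

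\textbf{Comparison.} The paper's argument is shorter and uses only linearity together with the index-sum inequality defining $B(m)$. Yours really only uses that an affine bijection maps translates of $H_i$ to translates of $T(H_i)$. It also gives $T(b_ke_i)=b_ke_{\sigma(i)}$ for every $k$ at once.

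\textbf{The restriction $m\ge 2$.} Both arguments tacitly need $m\ge 2$: for $m=1$ your binomial is constant, and the paper has no $i'\neq i$. The statement itself fails for $m=1$. For example, take $B=\{0,1,2,4\}\subset\F_5$ and $T(x)=2x+2$. Then $T$ permutes $B$ but sends $v_0=0$ to $2$, which is not a vertex.

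\textbf{Invertibility of $T$.} Your justification that $A$ is invertible is not valid as written, because a linear map can be injective on a spanning set without being invertible. Argue instead that the image of $T$ is an affine subspace containing $T(B(m))=B(m)$. Since $B(m)$ affinely spans $\F_q^m$, $T$ is onto and hence bijective. The paper simply assumes $T$ is a bijection.
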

\begin{proof}
Let $T({\bm x}) = A{\bm x} + b$ be an element in $\operatorname{GA}(B,m)$. By Lemma~\ref{hyperlanes lemma}, the only maximal hyperplanes over $B(m)$ are $H_i : x_i = 0$ for $1 \leq i \leq m$ and, possibly, $H_0 : x_1 + \dots + x_m = b_{\ell - 1}$.

Assume that $H_0,\ldots, H_m$ are the maximal hyperplanes over $B(m)$. Note that
\begin{equation}\label{26.05.14}
v_i = \bigcap_{j \neq i} H_j \qquad \text{ for } i=0,\ldots,m.
\end{equation}
As $T({\bm x})$ is a bijection,
\[T(v_i) = \bigcap_{j \neq i} T(H_j) \qquad \text{ for } i=0,\ldots,m.\]
The image $T(H_j)$ is also a maximal plane over $B(m)$ because $T({\bm x})$ is affine and a bijection. Therefore, by Equation~\ref{26.05.14}, $T(v_i)$ is also a vertex.

Now assume that $H_1,\ldots, H_m$ are the only maximal hyperplanes over $B(m)$. From the previous paragraph, $T(v_0) = v_0$. Define the lines
\[L_i := \bigcap_{j \neq i} H_j \qquad \text{ for } i=1,\ldots,m.\]
Note that $L_i = \left\{b_0e_i,\ldots,b_{\ell-1}e_i \right\}$, where $e_i$ is the $i$-th standard vector. As $T({\bm x})$ is affine and a bijection, then
\[T(L_i) = \bigcap_{j \neq i} T(H_j) = L_{\sigma(i)} \qquad \text{ for } i=1,\ldots,m \text{ and a permutation } \sigma \in S_m.\]
There are $m$ elements $b_{j_1},\ldots,b_{j_m}$ in $B$ such that
\[T(b_{j_i}e_i) = b_{\ell -1}e_{\sigma(i)} \qquad \text{ for } i=1,\ldots,m.\]
If $j_i = \ell -1$ for all $i=1,\ldots,m$, the proof has finished. Otherwise, there is $i$ such that $\ell - 1 - j_i > 0$. For $i^\prime \neq i$, $T(b_{\ell - 1 - j_i}e_{i^\prime}) = b_j e_{\sigma(i^\prime)}$, for some $j$. Note that $j \neq 0$ because $T(v_0) = v_0$. Therefore, $b_{j_i}e_i + b_{\ell - 1 - j_i}e_{i^\prime}$ is an element of $B(m)$. However, $T(b_{j_i}e_i + b_{\ell - 1 - j_i}e_{i^\prime}) = b_{\ell -1}e_{\sigma(i)} + b_j e_{\sigma(i^\prime)}$ ($T$ is linear since $T(0)=0)$, which is not in $B(m)$.
\end{proof}
\begin{lemma}\label{26.05.16}
Assume that $b_0 = 0$ in $B = \{b_0, \dots, b_{\ell-1}\}$ and define the vertices $v_0 := \mathbf{0}$ and $v_i := b_{\ell-1}e_i$ for $i=1,\ldots, m$. Any element $T({\bm x}) \in \operatorname{GA}(B,m)$ is fully determined by the values $T(v_i)$ for $i \in \{0,\ldots,m\}$.
\end{lemma}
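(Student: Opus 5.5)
The plan is to use the standard fact that an affine map of $\F_q^m$ is determined by its values on an affinely independent set of $m+1$ points. Write $T(\bm x) = A\bm x + b$. Then $T(v_0) = T(\mathbf 0) = b$, so the translation part is read off directly from $T(v_0)$.

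For the linear part, we have $T(v_i) - T(v_0) = A(v_i - v_0) = b_{\ell-1} A e_i$ for $i=1,\ldots,m$. Here the key point is that $b_{\ell-1}\neq 0$. Indeed, $b_0=0$ and the elements of $B$ are distinct; we may assume $\ell\geq 2$, since for $\ell=1$ the simplex is the single point $\mathbf 0$ and the statement is vacuous at the level of permutations of $B(m)$. Hence the $i$-th column of $A$ is
$$Ae_i = b_{\ell-1}^{-1}\bigl(T(v_i)-T(v_0)\bigr),$$
so $A$ is completely determined by the values $T(v_0),\ldots,T(v_m)$.

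Consequently, if $T,T'\in\operatorname{GA}(B,m)$ satisfy $T(v_i)=T'(v_i)$ for all $i$, then they have the same matrix and translation vector. They are therefore equal as affine maps, and in particular they induce the same permutation of $B(m)$.

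Combined with Lemma~\ref{26.05.15}, this shows that $T\mapsto (i\mapsto j \text{ with } T(v_i)=v_j)$ embeds $\operatorname{GA}(B,m)$ into $S_{m+1}$. This is presumably how the lemma will be used afterwards.

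There is no real obstacle here. The only points needing care are that $b_{\ell-1}$ is invertible, which uses $b_0=0$ and $\ell\ge 2$, and that "determined" is meant in the sense of the affine map itself, which then fixes its action on $B(m)$.
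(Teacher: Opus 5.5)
Your proof is correct and is essentially the paper's argument: you read off $b = T(v_0)$ and recover the $i$-th column of $A$ from $T(v_i) = b_{\ell-1}Ae_i + b$. Your explicit check that $b_{\ell-1}\neq 0$, using $b_0=0$ and $\ell\geq 2$, supplies a step the paper leaves implicit.
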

\begin{proof}
Let $T({\bm x}) = A{\bm x} + b$ be an element in $\operatorname{GA}(B,m)$. We have $T(v_0) = b$ and $T(v_i) = A b_{\ell-1}e_i + b = b_{\ell-1} A^i +b$, where $A^i$ is the $i$-th column of $A$. Thus, we obtain that $b = T(v_0)$ and $A$ is determined by the values $T(v_i)$ for $i \in \{1,\ldots,m\}$.
\end{proof}

\begin{proposition}\label{26.05.08}
If $B = \{b_0, b_1, \dots, b_{\ell-1}\} \subseteq \F_q$ forms an arithmetic progression with $b_0 = 0$, then the group of affine linear transformations $\operatorname{GA}(B, m)$ is isomorphic to $S_{m+1}$. 
\end{proposition}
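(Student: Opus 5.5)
The plan is to build a group homomorphism $\Phi\colon \operatorname{GA}(B,m)\to S_{m+1}$ from the action on the vertices, and to show it is injective and surjective. Write $b_i = i\,d$ for some $d\in\F_q^{*}$; the $\ell$ elements are distinct, so $\ell\le p$ when $\operatorname{char}\F_q=p$. Assume $\ell\ge 2$, so that the vertices $v_0=\mathbf{0}$ and $v_i=b_{\ell-1}e_i$, $1\le i\le m$, are distinct.

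\emph{The map and injectivity.} By Lemma~\ref{26.05.15}, every $T\in\operatorname{GA}(B,m)$ sends vertices to vertices. Since $T$ is a bijection of $\F_q^m$, it permutes $\{v_0,\dots,v_m\}$, and this defines $\Phi(T)\in S_{m+1}$. Compatibility with composition is immediate because $\Phi(T)$ is just the restriction of $T$ to the vertex set. By Lemma~\ref{26.05.16}, $T$ is determined by the values $T(v_0),\dots,T(v_m)$, so $\Phi$ is injective.

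\emph{Surjectivity.} I will exhibit elements of $\operatorname{GA}(B,m)$ whose images generate $S_{m+1}$.
\begin{itemize}
\item For $\sigma\in S_m$, the coordinate permutation $T_\sigma(\bm x)=(x_{\sigma(1)},\dots,x_{\sigma(m)})$ is linear and clearly preserves $B(m)$. It fixes $v_0$ and permutes $v_1,\dots,v_m$, so these maps realize the stabilizer of $v_0$, a copy of $S_m$.
\item It then suffices to add one element exchanging $v_0$ and $v_1$ while fixing the other vertices, since $S_m$ together with such a transposition generates $S_{m+1}$. The candidate is
\[
R(\bm x) := \bigl(b_{\ell-1}-x_1-x_2-\cdots-x_m,\; x_2,\;\dots,\; x_m\bigr).
\]
\end{itemize}

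\emph{The key step: $R$ preserves $B(m)$.} This is the only place the arithmetic progression hypothesis is used, and I expect it to be the main obstacle, though the computation should be direct. Take a point $(i_1d,\dots,i_md)\in B(m)$ with $s:=i_1+\cdots+i_m\le \ell-1$. Its image under $R$ is
\[
\bigl((\ell-1-s)d,\; i_2d,\;\dots,\; i_md\bigr).
\]
The first index $\ell-1-s$ lies in $[0,\ell-1]$. The index sum of the image is $\ell-1-i_1\le \ell-1$. Hence the image lies in $B(m)$.

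Since $R\circ R=\mathrm{id}$, the map $R$ is an affine bijection of $\F_q^m$ that restricts to a permutation of $B(m)$, so $R\in\operatorname{GA}(B,m)$. Checking vertices: $R(v_0)=b_{\ell-1}e_1=v_1$, $R(v_1)=\mathbf{0}=v_0$, and $R(v_i)=v_i$ for $i\ge 2$. So $\Phi(R)$ is the transposition $(v_0\,v_1)$.

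Combining the two bullets, $\Phi$ is surjective, and therefore $\operatorname{GA}(B,m)\cong S_{m+1}$.
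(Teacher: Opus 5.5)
Your proof is correct and is essentially the paper's argument. The vertex action is injective by Lemmas~\ref{26.05.15} and~\ref{26.05.16}, which bounds $|\operatorname{GA}(B,m)|$ by $(m+1)!$, and the coordinate permutations together with the corner flip $R$ (the paper's Type~II map, checked by the same index computation) realize all of $S_{m+1}$; your homomorphism framing with a single transposition $(v_0\,v_1)$ is just a tidier packaging of this. Your explicit hypothesis $\ell\ge 2$ is genuinely needed, and, like the paper, you also tacitly need $m\ge 2$ through Lemma~\ref{26.05.15}: for $m=1$ and $B=\{0,1,2,3\}\subset\F_5$, the map $x\mapsto 2x+1$ lies in $\operatorname{GA}(B,1)$ but moves $v_0$.
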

\begin{proof}
Let $T({\bm x})$ be an element in $\operatorname{GA}(B, m)$ and consider the hyperplanes $H_i : x_i = 0$ for $1 \leq i \leq m$ and $H_0 : x_1 + \dots + x_m = b_{\ell - 1}$. Since the elements of $B$ form an arithmetic progression, it must be that the diagonal points of $B(m)$ lie on the hyperplane $H_0$. Thus, by Proposition~\ref{hyperlanes lemma}, the maximal hyperplanes over $B(m)$ are $H_0, \ldots, H_m$.

From Lemmas~\ref{26.05.15} and \ref{26.05.16},
we obtain 
\[|\operatorname{GA}(B,m)| \leq (m+1)!.\]
We now show that there are $(m+1)!$ distinct elements in $\operatorname{GA}(B, m)$ by describing how these transformations permute the vertices $v_j$'s.

\begin{itemize}
\item Type I (coordinate permutations). For each element $\sigma \in S_m$, the map
\[(x_1, \dots, x_m) \mapsto (x_{\sigma(1)}, \dots, x_{\sigma(m)})\]
fixes the origin $v_0 = \mathbf{0}$, permutes $\{v_1, \dots, v_m\}$, and fixes $B(m)$ because if $(b_{i_1}, b_{i_2}, \dots, b_{i_m}) \in B(m)$, then $(b_{i_{\sigma(1)}}, b_{i_{\sigma(2)}}, \dots, b_{i_{\sigma(m)}})$ is also in $B(m)$ as $i_{\sigma(1)} + \cdots + i_{\sigma(m)} = i_1+\cdots+i_m < \ell$. There are $m!$ Type I transformations.

\item Type II (corner flips). For each element $j \in \{1, \dots, m\}$, define $T_{j}({\bm x})$ by 
\begin{equation*}
x_j \mapsto b_{\ell - 1} - \sum_{i = 1}^m x_i, \qquad x_k \mapsto x_k \text{ for } k \neq j.
\end{equation*}
We now check that $T_j(B(m)) = B(m)$. Assume, without loss of generality, $j = 1$. Consider $\mathbf{b}=(b_{i_1},\dots,b_{i_m})\in B(m)$, i.e., $i_1 + \dots + i_m < \ell$. Therefore, 
\begin{align*}
T_1(\mathbf{b}) &= \left(b_{\ell -1} - (b_{i_1} + \cdots+b_{i_m}), b_{i_2}, \ldots, b_{i_m})\right)\\
&= \left(b_{\ell -1 - \left(i_1 + \cdots + i_m\right) }, b_{i_2}, \ldots, b_{i_m})\right) \in B(m).
\end{align*}
The last equality follows from the fact that as $B = \{b_0, b_1, \dots, b_{\ell-1}\}$ forms an arithmetic progression with $b_0 = 0$, then $b_i = ib_1$. Thus, $b_{i}-b_{j} = ib_1 - i^\prime b_1 = (i-i^\prime)b_1 = b_{i-i^\prime}$. With respect to the vertices, we can see that $T({\bm x})$ permutes $v_0$ and $v_j$ while fixing the rest. Together with the identity, we obtain $m+1$ transformations.
\end{itemize}

We can see that the Type I transformations form the stabilizer of the origin $v_0 = \mathbf{0}$. While a Type II transformation is a transposition $(v_0\; v_j)$. As any permutation $\tau$ of $\{v_0, v_1, \dots, v_m\}$ can be written as $\tau = \sigma \circ T_j({\bm x})$, where $T_j({\bm x})$ sends $v_0$ to $\tau(v_0) = v_j$, or the identity if $\tau$ fixes $v_0$, and $\sigma \in S_m$ handles the remaining permutation of $\{v_1, \dots, v_m\}$, we see that every permutation can be written in terms of Type I and Type II transformations. Combining the two bounds, we obtain $|\operatorname{GA}(B, m)| = (m+1)!$, and the action on vertices gives the isomorphism $\operatorname{GA}(B, m) \cong S_{m+1}$.
\end{proof}

\begin{theorem}\label{arth}
Assume that $b_0 = 0$ in $B = \{b_0, \dots, b_{\ell-1}\}$. The group $\operatorname{GA}(B,m)$ is isomorphic to $S_{m+1}$ if and only if $B$ is an arithmetic progression.
\end{theorem}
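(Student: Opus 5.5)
One direction is already done: if $B$ is an arithmetic progression with $b_0=0$, Proposition~\ref{26.05.08} gives $\operatorname{GA}(B,m)\cong S_{m+1}$. So the plan is to prove the converse. Assume $\operatorname{GA}(B,m)\cong S_{m+1}$; we will show that $b_i = i\,b_1$ for all $i$. For $m=1$ this is not automatic, so we treat general $m\ge 1$ uniformly. For $m=1$ the group $S_2$ acts on $B$ by an affine map $x\mapsto ax+c$ that is not the identity. We plan to handle that case directly through the vertex argument below.

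First we count. By Lemmas~\ref{26.05.15} and~\ref{26.05.16}, every $T\in\operatorname{GA}(B,m)$ permutes the vertices $v_0,\dots,v_m$ and is determined by that permutation. Hence $\operatorname{GA}(B,m)$ embeds in $S_{m+1}$. If the group has order $(m+1)!$, the embedding is onto, so some $T$ sends $v_0\mapsto v_1$. Next we look at the proof of Lemma~\ref{26.05.15}. If $H_0$ were not maximal, the argument there shows that every $T$ fixes $v_0$, which contradicts the existence of such a $T$. So $T(v_0)\neq v_0$ for some $T$ (for $m=1$ we argue directly that the vertices $0,b_{\ell-1}$ are swapped).

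Second, we pin down $T$. Take the $T$ corresponding to the transposition $(v_0\,v_1)$ that fixes $v_2,\dots,v_m$. By Lemma~\ref{26.05.16}, this $T$ is the unique affine map with those vertex values. Computing it, $T(\bm x)=(b_{\ell-1}-x_1-\dots-x_m,\,x_2,\dots,x_m)$, which is exactly the Type II corner flip. Since $T$ preserves $B(m)$, it maps the edge $L_1=\{b_ie_1\}$ to itself. Concretely, $b_ie_1\mapsto (b_{\ell-1}-b_i)e_1$, so the map $b\mapsto b_{\ell-1}-b$ permutes $B$. (For $m=1$, the non-identity element is the same reflection $x\mapsto b_{\ell-1}-x$, because it must swap the two endpoints $0$ and $b_{\ell-1}$. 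That the endpoints are preserved comes from $H_1:x=0$ being the only kind of maximal hyperplane being fixed; if this is shaky for $m=1$, we instead use $m\ge 2$ throughout.)

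The main obstacle is extracting the full arithmetic progression, since invariance under a reflection alone does not give it. For $m\ge2$ we will use the 2-dimensional slice. Apply $T$ to the points $b_ie_1+b_je_2\in B(m)$ with $i+j<\ell$. The image is $(b_{\ell-1}-b_i-b_j)e_1+b_je_2$, and it must lie in $B(m)$. So $b_{\ell-1}-b_i-b_j=b_k$ for some $k$ with $k+j<\ell$. We then induct on $j$ with $i=0$. The reflection gives $b_{\ell-1}-b_j=b_{\ell-1-j}$ after we show the order is respected. Indeed, the points $b_je_2$ with $j\le \ell-1$ force $b_{\ell-1}-b_j\in\{b_0,\dots,b_{\ell-1-j}\}$. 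Counting from $j=\ell-1$ downward, the injectivity of $j\mapsto b_{\ell-1}-b_j$ forces $b_{\ell-1}-b_j=b_{\ell-1-j}$. Then, with $i=1$, we get $b_{\ell-1}-b_1-b_j=b_k$ with $k\le \ell-1-j$. A similar counting argument, restricted to $k\le \ell-2-j$ and using injectivity in $j$, gives $b_{\ell-1}-b_1-b_j=b_{\ell-2-j}$. Combining this with the reflection identity yields $b_{\ell-1-j}-b_{\ell-2-j}=b_1$ for all $j$, so $b_i=ib_1$ and $B$ is an arithmetic progression. The delicate part is the pigeonhole/ordering step that turns membership $b_k\in\{b_0,\dots,b_{\ell-1-j}\}$ into the exact index. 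We plan to do it by a descending induction on $j$.
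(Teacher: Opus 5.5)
For $m\ge 2$ your argument is essentially the paper's. The same counting via Lemmas~\ref{26.05.15} and~\ref{26.05.16} yields the Type II flip $T$. The same test points $b_je_2$ and $b_1e_1+b_je_2$ give first $b_{\ell-1}-b_j=b_{\ell-1-j}$ and then $b_k-b_1=b_{k-1}$. Your descending pigeonhole is only different bookkeeping for the paper's summation of the inequalities $\tau(i)\le \ell-1-i$ and its ascending elimination. In the second round, state explicitly why $k=\ell-1-j$ is excluded: by the reflection identity it would give $b_{\ell-1-j}-b_1=b_{\ell-1-j}$, i.e.\ $b_1=0$.

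What you should remove is the promise to handle $m=1$ directly. No argument can do this, because the statement is false for $m=1$. The paper's proof tacitly assumes $m\ge 2$: it uses the points $(0,b_i,0,\dots,0)$ and $(b_1,b_i,0,\dots,0)$, and Lemma~\ref{26.05.15} and Proposition~\ref{hyperlanes lemma} also break down when hyperplanes are points.

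For a counterexample, order $B=(b_0,b_1,b_2)=(0,2,1)$ in $\mathbb{F}_7$. Then $B(1)=\{0,1,2\}$. An affine map $x\mapsto ax+c$ permuting this set preserves the sum of its elements, which forces $c=1-a$. The image $\{1-a,1,1+a\}$ must then equal $\{0,1,2\}$, so $a=\pm 1$. Hence $\operatorname{GA}(B,1)=\{x,\,2-x\}\cong S_2$, although $b_2=1\neq 2b_1=4$, so $B$ is not an arithmetic progression.

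The same example refutes your claim that for $m=1$ the nontrivial element must be $x\mapsto b_{\ell-1}-x$. Here that map is $x\mapsto 1-x$, which sends $2$ to $6\notin B$. The underlying reason is that for $m=1$ the set $B(1)=B$ forgets the ordering of $B$, so there are no distinguished vertices to anchor the argument. Add the hypothesis $m\ge 2$; with it, your proof is complete.
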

\begin{proof}
By Proposition~\ref{26.05.08}, we just need to show that if $\mathrm{GA}(B,m)$ is isomorphic to $S_{m+1}$, then $B$ is an arithmetic progression.

Assume that $\mathrm{GA}(B,m)$ is isomorphic to $S_{m+1}$. Define the vertices $v_0 := \mathbf{0}$ and $v_i := b_{\ell-1}e_i$ for $i=1,\ldots, m$. As there are $(m+1)!$ transformations, by Lemmas~\ref{26.05.15} and~\ref{26.05.16}, there is a transformation $T({\bm x}) = A{\bm x} + b$ that permutes the origin $v_0$ with $b_{\ell-1}e_1$ and fixes the rest of the vertices. The condition $T({\bm 0}) = b_{\ell-1}e_1$ implies that $b = b_{\ell-1}e_1$, and the condition $\phi(b_{\ell-1}e_1) = {\bm 0}$ implies that the first column $A^1$ of $A$ is $-e_1$. From $T(b_{\ell-1}e_i) = b_{\ell-1}e_i$ for $i=2,\ldots,m$, we get that the first row of $A$ has all $-1$ and $A$ has the identity matrix
$I_{m-1}$ in its lower-right submatrix. This is precisely a Type II
transformation, and explicitly
\[
  T(x_1,\ldots,x_m) = \Bigl(b_{\ell-1} - \textstyle\sum_{k=1}^m x_k,\;
  x_2,\;\ldots,\; x_m\Bigr).
\]

We now prove that $b_{\ell-1} - b_i = b_{\ell-1-i}$ for all
$0 \le i \le \ell-1$. As $(0, b_i, 0, \ldots, 0) \in B(m)$, then
\[T(0, b_i, 0, \ldots, 0) =  (b_{\ell-1} - b_i,\, b_i,\, 0, \ldots, 0) \in B(m).\]
Hence, $b_{\ell-1} - b_i = b_{\tau(i)}$ for some index $\tau(i)$ with $\tau(i) + i \le \ell-1$, i.e., $\tau(i) \le \ell-1-i$. Since $T(\bm x)$ is a bijection on $B(m)$, the map $\tau$ is a bijection on $\{0, \ldots, \ell-1\}$. Summing the
inequalities, we obtain
\[
  \sum_{i=0}^{\ell-1} \tau(i) \le \sum_{i=0}^{\ell-1}(\ell-1-i) =
  \frac{(\ell-1)\ell}{2},
\]
and since $\tau$ is a bijection on $\{0, \ldots, \ell-1\}$ both sides equal
$\frac{(\ell-1)\ell}{2}$. Hence equality holds at every $i$, forcing $\tau(i)
= \ell-1-i$, i.e., $b_{\ell-1} - b_i = b_{\ell-1-i}$ for all $0 \le i \le
\ell-1$.

Next, we must bound the values of $b_k - b_1$. For any $0 \leq i \leq \ell-2$, the point $(b_1, b_i, 0, \ldots, 0)$ is in $B(m)$. Applying $T$, we get:
\[
T(b_1, b_i, 0, \ldots, 0) = (b_{\ell-1} - b_i - b_1, b_i, 0, \ldots, 0) \in B(m).
\]
Substituting $b_{\ell-1} - b_i = b_{\ell-1-i}$, the first coordinate becomes $b_{\ell-1-i} - b_1$. Since this point is in $B(m)$, its first index must be at most $\ell - 1 -i $. Setting $k = \ell - 1 - i$, this implies that $b_k - b_1 = b_j$ for some $j \le k$. In other words, $b_k - b_1 \in \{b_0, b_1, \ldots, b_k\}$.

Now, we show $b_k - b_1 = b_{k-1}$ by induction. 
\begin{itemize}
    \item For $k=1$: $b_1 - b_1 = 0 = b_0$.
    \item For $k=2$: $b_2 - b_1 \in \{b_0, b_1, b_2\}$. It cannot be $b_0$ (since $b_2 \neq b_1$) and it cannot be $b_2$ (since $b_1 \neq 0$). Thus, $b_2 - b_1 = b_1 \implies b_2 = 2b_1$.
    \item For general $k$: The value $b_k - b_1$ must belong to $\{b_0, \ldots, b_k\}$. It cannot be $b_k$ (since $b_1 \neq 0$). It cannot equal $b_i$ for $i < k-1$ because that would mean $b_k = b_i + b_1 = b_{i+1}$ (by the induction hypothesis), which contradicts that the elements of $B$ are strictly distinct. Thus, by process of elimination, it must be that $b_k - b_1 = b_{k-1}$.
\end{itemize}

Therefore, $b_k = b_{k-1} + b_1$ for all $k$, meaning $b_k = k \cdot b_1$. This concludes the proof that $B$ is an arithmetic progression with common difference $b_1$.
\end{proof}

The following result characterizes the group $\operatorname{GA}(B,m)$ when $b_0=0$ in $B$.
\begin{theorem}\label{26.05.18}
Assume that $b_0 = 0$ in $B = \{b_0, \dots, b_{\ell-1}\}$. We have
\[
\mathrm{GA}(\mathcal{B},m) \cong
\begin{cases}
S_{m+1} & \text{ if } B \text{ is an arithmetic progression, and}\\
S_m & \text{ otherwise.}
\end{cases}
\]
\end{theorem}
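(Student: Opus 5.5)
The plan is to use the action of $\operatorname{GA}(B,m)$ on the vertex set $\{v_0,\dots,v_m\}$, with $v_0=\mathbf 0$ and $v_i=b_{\ell-1}e_i$. By Lemma~\ref{26.05.15}, every $T\in\operatorname{GA}(B,m)$ permutes these $m+1$ vertices. Composition of affine maps corresponds to composition of the induced vertex permutations, so we get a group homomorphism
\[
\rho\colon \operatorname{GA}(B,m)\to S_{m+1}.
\]
By Lemma~\ref{26.05.16}, $\rho$ is injective. So $\operatorname{GA}(B,m)$ is isomorphic to its image, a subgroup of $S_{m+1}$, and the task reduces to identifying this image. The arithmetic-progression case is already handled by Proposition~\ref{26.05.08} and Theorem~\ref{arth}. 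It remains to show that when $B$ is not an arithmetic progression, the image is exactly the stabilizer of $v_0$, which is isomorphic to $S_m$.

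First, for an arbitrary $B$ with $b_0=0$, I will check that the Type~I coordinate permutations $\bm x\mapsto (x_{\sigma(1)},\dots,x_{\sigma(m)})$ always lie in $\operatorname{GA}(B,m)$. The verification in the proof of Proposition~\ref{26.05.08} for Type~I maps never used the arithmetic-progression hypothesis. These maps fix $v_0$ and realize every permutation of $\{v_1,\dots,v_m\}$. Hence the image of $\rho$ contains the full stabilizer $\operatorname{Stab}(v_0)\cong S_m$, which gives $|\operatorname{GA}(B,m)|\ge m!$.

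Second, for the upper bound I will argue by contradiction. Suppose some $T\in\operatorname{GA}(B,m)$ satisfies $T(v_0)\neq v_0$. Then the image of $\rho$ contains $\operatorname{Stab}(v_0)$ together with an element moving $v_0$. Composing with elements of $\operatorname{Stab}(v_0)$, the orbit of $v_0$ under the image is all of $\{v_0,\dots,v_m\}$, since $\operatorname{Stab}(v_0)$ is transitive on $\{v_1,\dots,v_m\}$. By orbit–stabilizer, the image then has order at least $(m+1)\cdot m!=(m+1)!$, so $\rho$ is an isomorphism onto $S_{m+1}$. Theorem~\ref{arth} would then force $B$ to be an arithmetic progression, contradicting the hypothesis. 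Therefore every $T$ fixes $v_0$, the image is exactly $\operatorname{Stab}(v_0)\cong S_m$, and $\operatorname{GA}(B,m)\cong S_m$.

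The only delicate point is making sure $\rho$ is a well-defined injective homomorphism, so that the counting is legitimate. Well-definedness and the homomorphism property follow because each $T$ is a bijection preserving the vertex set (Lemma~\ref{26.05.15}); injectivity is Lemma~\ref{26.05.16}. The degenerate case $m=1$ also needs a check: $S_1$ is trivial, and the argument above shows the only affine map is the identity unless $B$ is an arithmetic progression, so the statement still holds.
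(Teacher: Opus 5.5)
For $m\ge 2$ your argument is correct and is essentially the paper's. Both of you embed $\operatorname{GA}(B,m)$ into $S_{m+1}$ through the vertex action (Lemmas~\ref{26.05.15} and~\ref{26.05.16}), observe that the coordinate permutations give the full stabilizer of $v_0$ inside the image, and finish with Proposition~\ref{26.05.08} and Theorem~\ref{arth}. Your orbit--stabilizer count is a direct proof of the maximality of $S_m$ in $S_{m+1}$, which the paper simply quotes.

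The genuine error is your final paragraph on $m=1$. The ``argument above'' is not available there. Lemma~\ref{26.05.15} needs $m\ge 2$: its proof picks an index $i'\neq i$, and Proposition~\ref{hyperlanes lemma} fails for $m=1$ because every point $x_1=b_k$ of $B$ is then a maximal hyperplane. In fact both the lemma and the statement are false for $m=1$.

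\textbf{Non-AP counterexample.} Take $B=\{0,1,3\}\subseteq\mathbb{F}_7$ with $b_0=0$, $b_1=1$, $b_2=3$; this is not an arithmetic progression. The map $T(x)=2x+1$ sends $0\mapsto 1\mapsto 3\mapsto 0$, so $T\in\operatorname{GA}(B,1)$. It moves the vertex $v_0=0$ to the non-vertex $1$, and it has order $3$. So $\operatorname{GA}(B,1)$ is not trivial; it is cyclic of order $3$. This contradicts your claim that only the identity survives.

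\textbf{AP counterexample.} Take $B=\mathbb{F}_3$ with $b_i=i$. Every affine bijection of $\mathbb{F}_3$ permutes $B$, so $\operatorname{GA}(B,1)=\mathrm{AGL}(1,3)\cong S_3$, not $S_2$.

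So you should not assert that the degenerate case ``still holds.'' Instead, state the implicit hypotheses explicitly. You need $m\ge 2$, and also $\ell\ge 2$: if $\ell=1$ then $b_{\ell-1}=0$, all the $v_i$ coincide, and Lemma~\ref{26.05.16} breaks down. The paper's own proof relies on the same lemmas and silently needs the same restrictions.
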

\begin{proof}
The coordinate permutations $(x_1,\dots,x_m) \mapsto (x_{\sigma(1)}, \dots, x_{\sigma(m)})$ preserve the index-sum condition defining $\mathcal{B}(m)$, so they lie in $\mathrm{GA}(\mathcal{B},m)$ and give an embedded copy of $S_m$ stabilizing $v_0 = 0$. When the plane $H_0$ defined in Proposition~\ref{hyperlanes lemma} is maximal, any $T({\bm x}) \in \mathrm{GA}(\mathcal{B},m)$ permutes the $m+1$ vertices by Lemma~\ref{26.05.15}. Therefore, by Lemma~\ref{26.05.16}, we obtain an injection $\mathrm{GA}(\mathcal{B},m) \hookrightarrow S_{m+1}$ whose image contains $S_m$, forcing the image to be $S_m$ or $S_{m+1}$ by the maximality of $S_m$ into $S_{m+1}$.
\end{proof}
As a consequence of the previous results, we obtain the permutation group of CAP codes for certain cases.
\begin{corollary}
Assume that $b_0 = 0$ in $B = \{b_0, \dots, b_{\ell-1}\}$ and $0 < \nu < \ell/2$. Then,
\[
\Per(\CAP(B,m,\nu)) \cong
\begin{cases}
S_{m+1} & \text{ if } B \text{ is an arithmetic progression, and}\\
S_m & \text{ otherwise.}
\end{cases}
\]
\end{corollary}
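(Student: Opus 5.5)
The plan is to read the corollary off Theorem~\ref{26.05.17} and Theorem~\ref{26.05.18}, and I expect it to need no new ideas. Since $0<\nu<\ell/2$, Theorem~\ref{26.05.17} gives $\Per(\CAP(B,m,\nu)) = \operatorname{GA}(B,m)$. The identification here is through the map $\pi \mapsto T_\pi$ and the inclusion of Remark~\ref{26.05.13}.

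Before quoting this equality I would confirm that it is an isomorphism of groups and not merely a bijection of sets. The hypothesis $\nu\geq 1$ makes the evaluations of $x_1,\dots,x_m$ codewords, and together these coordinate functions separate the points of $B(m)$. Hence distinct permutations $\pi$ give distinct maps $T_\pi$ on $B(m)$. In the other direction, an affine map is determined by its restriction to $B(m)$, as Lemma~\ref{26.05.16} shows. Composition corresponds to composition, up to the usual inverse coming from the convention $\pi(a)=(a_{\pi(1)},\dots)$; replacing $\pi$ by $\pi^{-1}$ if necessary gives a group isomorphism. With this in place, $\Per(\CAP(B,m,\nu))\cong \operatorname{GA}(B,m)$ as abstract groups.

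Since $b_0=0$, the two cases then come directly from Theorem~\ref{26.05.18}. If $B$ is an arithmetic progression, then $\operatorname{GA}(B,m)\cong S_{m+1}$; this is Proposition~\ref{26.05.08}. Otherwise $\operatorname{GA}(B,m)\cong S_m$. The step needing the most care is the ``otherwise'' case, because the proof of Theorem~\ref{26.05.18} only argues explicitly when $H_0$ is maximal.

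In that case I would first cite Theorem~\ref{arth}, which rules out $S_{m+1}$, so the image in $S_{m+1}$ is exactly $S_m$. If instead $H_0$ is not maximal, the second part of the proof of Lemma~\ref{26.05.15} shows that every $T$ fixes $v_0$ and permutes $v_1,\dots,v_m$. Lemma~\ref{26.05.16} then embeds $\operatorname{GA}(B,m)$ into $S_m$, and the coordinate permutations give all of $S_m$. Chaining these isomorphisms proves the corollary.
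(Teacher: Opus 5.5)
Your proposal is correct for $m\ge 2$ and follows the paper's route. The paper's proof is exactly the combination of Theorems~\ref{26.05.17} and~\ref{26.05.18}, and your extra checks make explicit what the paper's proof of Theorem~\ref{26.05.18} leaves implicit: that $\pi\mapsto T_\pi$ gives a group isomorphism, and the split of the non-arithmetic case according to whether $H_0$ is maximal (using Theorem~\ref{arth} and the second half of the proof of Lemma~\ref{26.05.15}).

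One caveat applies equally to the paper. The classification of maximal hyperplanes, Lemma~\ref{26.05.15} and Theorem~\ref{arth} all rely on a second coordinate index, so they silently require $m\ge 2$. For $m=1$ the statement is in fact false: take $B=(0,1,2,4)\subseteq\F_7$ and $\nu=1$; the map $x\mapsto 2x$ permutes $B$, so $\Per(\CAP(B,1,1))$ contains a $3$-cycle even though $B$ is not an arithmetic progression.
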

\begin{proof}
This is a consequence of Theorems~\ref{26.05.17} and~\ref{26.05.18}.
\end{proof}

\section*{Declarations}
\subsection*{Conflict of interest} The authors declare no conflict of interest.


\begin{thebibliography}{10}

\bibitem{Ball2020CodingTP}
T.~Ball, E.~Camps, H.~Chimal-Dzul, D.~Jaramillo-Velez, H.~H. L'opez, N.~S. Nichols, M.~Perkins, I.~Soprunov, G.~Vera-Mart'inez, and G.~R. Whieldon.
\newblock Coding theory package for macaulay2.
\newblock {\em ArXiv}, abs/2007.06795, 2020.

\bibitem{LESS_is_even_more}
L.~Beckwith, A.~Esser, E.~Persichetti, P.~Santini, and F.~Zweydinger.
\newblock {LESS} is even more: Optimizing digital signatures from code equivalence.
\newblock Cryptology {ePrint} Archive, Paper 2025/1424, 2025.

\bibitem{beelenGHWcartesian}
P.~Beelen and M.~Datta.
\newblock Generalized {H}amming weights of affine {C}artesian codes.
\newblock {\em Finite Fields Appl.}, 51:130--145, 2018.

\bibitem{LESS}
J.-F. Biasse, G.~Micheli, E.~Persichetti, and P.~Santini.
\newblock Less is more: Code-based signatures without syndromes.
\newblock In {\em Progress in Cryptology - AFRICACRYPT 2020: 12th International Conference on Cryptology in Africa, Cairo, Egypt, July 20 – 22, 2020, Proceedings}, page 45–65, Berlin, Heidelberg, 2020. Springer-Verlag.

\bibitem{bioglio}
V.~Bioglio, I.~Land, and C.~Pillet.
\newblock Group properties of polar codes for automorphism ensemble decoding.
\newblock {\em IEEE Transactions on Information Theory}, 69(6):3731--3747, 2023.

\bibitem{buchberger}
B.~Buchberger.
\newblock Ein {A}lgorithmus zum {A}uffinden der {B}asiselemente des {R}estklassenringes nach einem nulldimensionalen {P}olynomideal.
\newblock Dissertation an dem Math. Inst. der Universit\"at von Innsbruck, 1965.

\bibitem{eduardoGHWHyperbolic}
E.~Camps-Moreno, I.~Garc\'{\i}a-Marco, H.~H. L\'{o}pez, I.~M\'{a}rquez-Corbella, E.~Mart\'{\i}nez-Moro, and E.~Sarmiento.
\newblock On the generalized {H}amming weights of hyperbolic codes.
\newblock {\em Journal of Algebra and Its Applications}, 23(07):2550062, 2024.

\bibitem{sanjoseGHWNT}
E.~Camps-Moreno, H.~H. L\'opez, G.~L. Matthews, and R.~San-Jos\'e.
\newblock The weight hierarchy of decreasing norm-trace codes.
\newblock {\em Des. Codes Cryptogr.}, 93(7):2873--2894, 2025.

\bibitem{sanjoseGHWsquarefree}
C.~Carvalho, H.~López, and R.~San-José.
\newblock Cartesian square-free codes.
\newblock {\em ArXiv 2511.08304}, 2025.

\bibitem{nestedcartesian}
C.~Carvalho, V.~G.~L. Neumann, and H.~H. L\'{o}pez.
\newblock Projective nested cartesian codes.
\newblock {\em Bull. Braz. Math. Soc. (N.S.)}, 48(2):283--302, 2017.

\bibitem{chen2007secure}
H.~Chen, R.~Cramer, S.~Goldwasser, R.~De~Haan, and V.~Vaikuntanathan.
\newblock Secure computation from random error correcting codes.
\newblock In {\em Annual International Conference on the Theory and Applications of Cryptographic Techniques}, pages 291--310. Springer, 2007.

\bibitem{geramita}
S.~M. Cooper, A.~Seceleanu, c.~O. Toh\u{a}neanu, M.~V. Pinto, and R.~H. Villarreal.
\newblock Generalized minimum distance functions and algebraic invariants of {G}eramita ideals.
\newblock {\em Adv. in Appl. Math.}, 112:101940, 34, 2020.

\bibitem{cox}
D.~A. Cox, J.~Little, and D.~O'Shea.
\newblock {\em Ideals, varieties, and algorithms}.
\newblock Undergraduate Texts in Mathematics. Springer, Cham, fourth edition, 2015.
\newblock An introduction to computational algebraic geometry and commutative algebra.

\bibitem{fengraoMajority}
G.~L. Feng and T.~R.~N. Rao.
\newblock Decoding algebraic-geometric codes up to the designed minimum distance.
\newblock 39(1):37--45, 1993.

\bibitem{ghorpade}
S.~R. Ghorpade.
\newblock A note on {N}ullstellensatz over finite fields.
\newblock {\em Contemp. Math.}, 738:23--32, 2019.

\bibitem{villarreal_gorenstein_prm_type}
M.~Gonz\'alez-Sarabia, H.~Mu\~noz George, J.~A. Ordaz, E.~S\'aenz-de Cabez\'on, and R.~H. Villarreal.
\newblock Indicator functions, v-numbers and {G}orenstein rings in the theory of projective {R}eed-{M}uller-type codes.
\newblock {\em Des. Codes Cryptogr.}, 92(11):3317--3353, 2024.

\bibitem{guruswammiGHWlistdecodingTensorInterleaved}
P.~Gopalan, V.~Guruswami, and P.~Raghavendra.
\newblock List decoding tensor products and interleaved codes.
\newblock {\em SIAM J. Comput.}, 40(5):1432--1462, 2011.

\bibitem{Grassl_Roetteler}
M.~Grassl and M.~Roetteler.
\newblock Leveraging automorphisms of quantum codes for fault-tolerant quantum computation.
\newblock In {\em 2013 IEEE International Symposium on Information Theory}, pages 534--538, 2013.

\bibitem{M2}
D.~R. Grayson and M.~E. Stillman.
\newblock Macaulay2, a software system for research in algebraic geometry.
\newblock Available at \url{http://www.math.uiuc.edu/Macaulay2/}.

\bibitem{guruswammiGHWlistdecoding}
V.~Guruswami.
\newblock List decoding from erasures: bounds and code constructions.
\newblock {\em IEEE Trans. Inform. Theory}, 49(11):2826--2833, 2003.

\bibitem{hamadaSteaneEnlargement}
M.~Hamada.
\newblock Concatenated quantum codes constructible in polynomial time: efficient decoding and error correction.
\newblock {\em IEEE Trans. Inform. Theory}, 54(12):5689--5704, 2008.

\bibitem{pellikaanGHWRM}
P.~Heijnen and R.~Pellikaan.
\newblock Generalized {H}amming weights of {$q$}-ary {R}eed-{M}uller codes.
\newblock {\em IEEE Trans. Inform. Theory}, 44(1):181--196, 1998.

\bibitem{Huffman_Pless_2003}
W.~C. Huffman and V.~Pless.
\newblock {\em Fundamentals of Error-Correcting Codes}.
\newblock Cambridge University Press, 2003.

\bibitem{jaramillo}
D.~Jaramillo, M.~Vaz~Pinto, and R.~H. Villarreal.
\newblock Evaluation codes and their basic parameters.
\newblock {\em Des. Codes Cryptogr.}, 89(2):269--300, 2021.

\bibitem{1Johannsen_polar}
L.~Johannsen, C.~Kestel, M.~Geiselhart, T.~Vogt, S.~T. Brink, and N.~Wehn.
\newblock Successive cancellation automorphism list decoding of polar codes.
\newblock In {\em 2023 12th International Symposium on Topics in Coding (ISTC)}, pages 1--5, 2023.

\bibitem{polar_19}
M.~Kamenev, Y.~Kameneva, O.~Kurmaev, and A.~Maevskiy.
\newblock Permutation decoding of polar codes.
\newblock In {\em 2019 XVI International Symposium "Problems of Redundancy in Information and Control Systems" (REDUNDANCY)}, pages 1--6, 2019.

\bibitem{kasamiRM}
T.~Kasami, S.~Lin, and W.~W. Peterson.
\newblock New generalizations of the {R}eed-{M}uller codes. {I}. {P}rimitive codes.
\newblock {\em IEEE Trans. Inform. Theory}, IT-14:189--199, 1968.

\bibitem{kkks}
A.~Ketkar, A.~Klappenecker, S.~Kumar, and P.~K. Sarvepalli.
\newblock Nonbinary stabilizer codes over finite fields.
\newblock {\em IEEE Trans. Inform. Theory}, 52(11):4892--4914, 2006.

\bibitem{kopparty_high_rate}
S.~Kopparty, M.~Kumar, and H.~Sha.
\newblock High rate multivariate polynomial evaluation codes.
\newblock In {\em S{TOC}'25---{P}roceedings of the 57th {A}nnual {ACM} {S}ymposium on {T}heory of {C}omputing}, pages 810--821. ACM, New York, [2025] \copyright 2025.

\bibitem{RM_CA1}
S.~Kudekar, S.~Kumar, M.~Mondelli, H.~D. Pfister, E.~\c{S}a\c{s}o\u{g}lu, and R.~Urbanke.
\newblock {R}eed-{M}uller codes achieve capacity on erasure channels.
\newblock In {\em Proceedings of the Forty-Eighth Annual ACM Symposium on Theory of Computing}, STOC '16, page 658–669, New York, NY, USA, 2016. Association for Computing Machinery.

\bibitem{RM_CA2}
S.~Kudekar, S.~Kumar, M.~Mondelli, H.~D. Pfister, E.~Şaşoǧlu, and R.~L. Urbanke.
\newblock {R}eed–{M}uller codes achieve capacity on erasure channels.
\newblock {\em IEEE Transactions on Information Theory}, 63(7):4298--4316, 2017.

\bibitem{matsumotoRGHW}
J.~Kurihara, T.~Uyematsu, and R.~Matsumoto.
\newblock Secret sharing schemes based on linear codes can be precisely characterized by the relative generalized hamming weight.
\newblock {\em IEICE Trans. Fundam. Electron. Commun. Comput. Sci.}, E95.A(11):2067--2075, 2012.

\bibitem{Lopez2014-bq}
H.~H. L{\'o}pez, C.~Renter{\'\i}a-M{\'a}rquez, and R.~H. Villarreal.
\newblock Affine cartesian codes.
\newblock {\em Des. Codes Cryptogr.}, 71(1):5--19, Apr. 2014.

\bibitem{Lopez2021-tu}
H.~H. L{\'o}pez, I.~Soprunov, and R.~H. Villarreal.
\newblock The dual of an evaluation code.
\newblock {\em Des. Codes Cryptogr.}, 89(7):1367--1403, July 2021.

\bibitem{luoPropertiesRGHWs}
Y.~Luo, C.~Mitrpant, A.~J.~H. Vinck, and K.~Chen.
\newblock Some new characters on the wire-tap channel of type {II}.
\newblock {\em IEEE Trans. Inform. Theory}, 51(3):1222--1229, 2005.

\bibitem{LOPEZ201913}
H.~H. López, F.~Manganiello, and G.~L. Matthews.
\newblock Affine cartesian codes with complementary duals.
\newblock {\em Finite Fields and Their Applications}, 57:13--28, 2019.

\bibitem{munuera1999hermitian}
C.~Munuera and D.~Ramirez.
\newblock The second and third generalized hamming weights of hermitian codes.
\newblock {\em IEEE Transactions on Information Theory}, 45(2):709--712, 1999.

\bibitem{GHWsToricSquarefree}
N.~Patanker and S.~K. Singh.
\newblock Generalized {H}amming weights of toric codes over hypersimplices and squarefree affine evaluation codes.
\newblock {\em Adv. Math. Commun.}, 17(3):626--643, 2023.

\bibitem{sanjoseGHWMPC}
R.~San-Jos\'e.
\newblock About the generalized {H}amming weights of matrix-product codes.
\newblock {\em Comput. Appl. Math.}, 44(4):Paper No. 186, 2025.

\bibitem{sanjoseGHWsPackage}
R.~San-Jos\'e.
\newblock An algorithm for computing generalized {H}amming weights and the {S}age package {$\tt{GHWs}$}.
\newblock {\em ACM Trans. Math. Software}, 51(4):Art. 26, 20, 2025.

\bibitem{githubGHWs}
R.~San-Jos\'{e}.
\newblock {GHWs}: A {S}age package for computing the generalized {H}amming weights of a linear code. {G}it{H}ub repository.
\newblock Available online: \url{https://github.com/RodrigoSanJose/GHWs}, 2025.

\bibitem{sorensen}
A.~B. S{\o}rensen.
\newblock Projective {R}eed-{M}uller codes.
\newblock {\em IEEE Trans. Inform. Theory}, 37(6):1567--1576, 1991.

\bibitem{sagemath}
{The Sage Developers}.
\newblock {\em {S}ageMath, the {S}age {M}athematics {S}oftware {S}ystem ({V}ersion 10.9)}, 2026.
\newblock {\tt https://www.sagemath.org}.

\bibitem{vardyIntractability}
A.~Vardy.
\newblock The intractability of computing the minimum distance of a code.
\newblock {\em IEEE Trans. Inform. Theory}, 43(6):1757--1766, 1997.

\bibitem{villarreal_book_monomial_algebras}
R.~H. Villarreal.
\newblock {\em Monomial algebras}.
\newblock Monographs and Research Notes in Mathematics. CRC Press, Boca Raton, FL, third edition, 2026.

\bibitem{wei1991ghw}
V.~Wei.
\newblock Generalized hamming weights for linear codes.
\newblock {\em IEEE Transactions on Information Theory}, 37(5):1412--1418, 1991.

\end{thebibliography}

\end{document}